\newtheorem{secthm}{Theorem}[section]
\newtheorem{seclem}[secthm]{Lemma}
\newtheorem{seccor}[secthm]{Corollary}
\newtheorem{secdefn}[secthm]{Definition}
\newtheorem{secrem}[secthm]{Remark}
\newtheorem{secprop}[secthm]{Proposition}
\newtheorem{secass}[secthm]{Assumption}
\def\red{\hfill $\lhd$}
\newcommand{\bR} { {\mathbb R}}
\newcommand{\bP} { {\mathbb P}}
\newcommand{\bZ} { {\mathbb Z}}
\newcommand{\cM} { {\mathcal M}}
\newcommand{\cN} { {\mathcal N}}
\newcommand{\cO} { {\mathcal O}}
\newcommand{\cQ} { {\mathsf Q}}
\newcommand{\cR} { {\mathsf R}}
\newcommand{\cS} { {\mathcal S}}
\newcommand{\cK} { {\mathcal K}}
\newcommand{\cF} { {\mathcal F}}
\newcommand{\cU} { {\mathcal U}}
\newcommand{\cX} { {\mathcal X}}
\title{
Design of Privacy-Preserving Dynamic Controllers
}
\author{Yu Kawano and Ming Cao
\thanks{This work was supported in part by the European Research Council (ERC-CoG-771687) and the Dutch Organization for Scientific Research (NWO-vidi-14134).}
\thanks{Y. Kawano is with the Graduate School of Advanced Science and Engineering, Hiroshima University, Higashi-Hiroshima, Japan (email: ykawano@hiroshima-u.ac.jp).}
\thanks{Ming Cao is with the Faculty of Science and Engineering, University of Groningen, 9747 AG Groningen, The Netherlands  (email: m.cao@rug.nl).}%
}
\begin{document}
\allowdisplaybreaks[4]

\maketitle
\thispagestyle{empty}
\pagestyle{empty}

\begin{abstract}
As a quantitative criterion for privacy of ``mechanisms'' in the form of data-generating processes, the concept of \emph{differential privacy} was first proposed in computer science and has later been applied to linear dynamical systems. However, differential privacy has not been studied in depth together with other properties of dynamical systems, and it has not been fully utilized for controller design. In this paper, first we clarify that a classical concept in systems and control, \emph{input observability} (sometimes referred to as \emph{left invertibility}) has a strong connection with differential privacy. In particular, we show that the Gaussian mechanism can be made highly differentially private by adding \emph{small} noise if the corresponding system is less input observable. Next, enabled by our new insight into privacy, we develop a method to design dynamic controllers for the classic  tracking control problem while addressing privacy concerns. We call the obtained controller through our design method the \emph{privacy-preserving controller}. The usage of such controllers is further illustrated by an example of tracking the prescribed power supply in a DC microgrid installed with smart meters while keeping the electricity consumers' tracking errors private.
\end{abstract}

\begin{keywords}
Discrete-time linear systems, Differential Privacy, Observability, Privacy-Preserving Controllers
\end{keywords}
\section{Introduction}
The trend of the Internet-of-Things (IoT) and cloud computing makes privacy and security become a research area of acute social and technological concerns, see e.g.~\cite{AIM:10,Weber:10,GBM:13,WWR:10,Ryan:11,TJA:10,Mervis:19}. To protect the privacy of data sources, the collected data are usually processed statistically before being publicized for different applications. However, even if one only publishes statistical analytics, not raw data, private personal information may still be identified by smart data mining algorithms that combine the statistics with other third party information, see e.g.~\cite{Sweeney:97,MS:04,Hansell:06,NS:06}. Motivated by threats on privacy, statistical disclosure control, or more generally privacy preserving data mining, has been intensively studied; see e.g.~\cite{WD:96,WD:12}. Representative techniques include the $K$-anonymity~\cite{Sweeney:02}, $l$-diversity~\cite{MGK:06}, $t$-closeness~\cite{LLV:07}, and differential privacy~\cite{DMN:06,DKM:06}. In particular, differential privacy enjoys the mathematical property of being quantifiable and thus has been used in solving various privacy-related problems arising in the domains of smart grids~\cite{ZJW:14,JZC:13,SDT:15}, health monitoring~\cite{DE:12,DFE:13}, blockchain (or bitcoin)~\cite{YMH:18,AKR:13} and mechanism design~\cite{MT:07}.

There is a growing need to treat privacy as a critical property of \emph{dynamical} systems instead of the feature of some \emph{static} time invariant data set. For example, in power grids, consumers' electricity consumption patterns change over time and are coupled in a closed loop with the stabilization actions of various controllers in  power systems. To address privacy issues of those datasets that are generated by dynamical systems, the standard concept of \emph{differential privacy} for static data has been extended to discrete-time linear dynamical systems, see e.g.~\cite{NP:14,LM:18}, which shows convincingly that the key idea of differential privacy, namely adding noise to data before publishing them, is also effective for privacy protection for dynamical data sets. However, there is still a considerable lack of in-depth understanding of the possible fundamental  interplay between differential privacy and other critical properties of dynamical systems~\cite{KOV:17}.

To address this challenge, we propose to take an approach that is deeply rooted in systems and control theory; to be more specific, we study privacy of dynamical systems by taking two major steps: first to study privacy in terms of \emph{input observability} and then to provide a privacy-preserving controller design method. The differential privacy level of a discrete-time linear system can be interpreted as a quantitative criterion for the difficulty of identifying its input, which triggers us to give a refreshing look at rich classic results on uniquely determining the input from the output in systems and control under the name of input observability~\cite{HP:98} or left invertibility~\cite{SM:69}. For input observability, there are already several qualitative criteria, e.g. the rank condition of the transfer function matrix~\cite{SM:69}, the PBH type test~\cite{Moylan:77,HP:98}, and Kalman's rank type conditions~\cite{MS:68,SM:69}. However, these existing conditions do not provide quantitative analysis. Therefore, there is a gap between the relatively new concept of differential privacy and the classical concept of input observability. To establish a bridge between this gap, we extend the notion of the Gramian to input observability. Then, we show that the Gaussian mechanism evaluates the maximum eigenvalue of the input observability Gramian; in other words, \emph{small} noise is enough to make the less input observable Gaussian mechanism highly differentially private. This new insight suggests that the input observability Gramian can be used for detailed privacy analysis, not restricted to differential privacy, just like what the standard controllability and observability Gramians can do for detailed controllability and observability analysis.

Next, we consider achieving trajectory tracking while protecting the tracking error as private information. Trajectory tracking itself has been studied as a part of the output regulation problem~\cite{Huang:04} for which dynamic output feedback controllers have been studied. The differential privacy level increases if the dynamic controllers are designed such that the maximum eigenvalue of the input observability Gramian is small, which is achieved by making the corresponding $H_{\infty}$-norm small.  In this paper, we provide a dynamic controller design method in order to address the tracking problem and to specify the $H_{\infty}$-norm simultaneously based on LMIs. It is worth pointing out that to increase the differential privacy level of the controller, one needs to make the $H_{\infty}$-norm of the controller small or add large noise, both of which may deteriorate the control performance. Therefore, privacy-preserving controller design reduces to a trade-off between the privacy level and control performance. 

Along this line of research on designing privacy-preserving controllers, there are related earlier works.  Differential privacy has been employed for privacy-preserving filtering~\cite{NP:14,LM:18}, but not for controller design. In particular,~\cite{NP:14} also studies the connection between differential privacy and the $H_{\infty}$-norm of a system; however, differential privacy has not been studied from the input observability perspective, which was considered in our preliminary conference version~\cite{KC:18}. Different from~\cite{NP:14,KC:18}, in this paper we consider not just i.i.d. noise; although this may seem to be a rather minor technical extension, it is in fact an important step towards obtaining a deeper understanding of the differential privacy level of a dynamical system. Also note that differential privacy has been used for LQ control~\cite{YJ:18} and distributed optimization~\cite{HE:17,HMD:12,HMV:15,HT:17,SCS:13}, where the controller gains or controller dynamics are designed without considering privacy issues, and consequently privacy-preserving noise is added separately, making protecting privacy independent of the controller design itself. In contrast, we design the controller with the incorporated goal of achieving high privacy levels using small noise.

The remainder of this paper is organized as follows. Section~\ref{DPA:s} introduces the concept of differential privacy and analyzes it from several aspects including input observability. Section~\ref{PCD:s} provides a privacy-preserving controller design method. Our method is illustrated by an example of DC microgrids installed with smart meters in Section~\ref{Ex:s}. Section~\ref{NDP:s} briefly mentions extensions of our results to nonlinear systems, where a part of the results has been presented in a preliminary conference version~\cite{KC2:18}. Finally, Section~\ref{Con:s} concludes the paper.

{\bf Notations:} The set of real numbers, non-negative real numbers, and non-negative integers are denoted by $\bR$, $\bR_+$ and $\bZ_+$, respectively. For vectors $x_1,\dots,x_m \in \bR^n$, a collective vector $[x_1^\top \ \cdots \ x_m^\top]^\top \in \bR^{nm}$ is also described by $[x_1 ; \cdots ; x_m]$ for the sake of simplicity of description. For the sequence $u(t) \in \bR^m$, $t \in \bZ_+$, a collective vector consisting of its subsequence is denoted by $U_t(\tau):=[u(\tau) ; \cdots ; u(\tau +t)] \in \bR^{(t+1)m}$; when $\tau =0$, the argument is omitted, i.e., $U_t:=[u(0) ; \cdots ; u(t)]$. For a square matrix $A \in \bR^{n \times n}$, its determinant is denoted by ${\rm det}(A)$, and when its eigenvalues are real, its maximum and minimum eigenvalues are denoted by $\lambda_{\max}(A)$ and $\lambda_{\min}(A)$, respectively. Further,~$A\succ 0$ means that~$A$ is symmetric and positive definite. The identity matrix of size $n$ is denoted by $I_n$. For the vector $x\in\bR^n$, its norms is denoted by $| x |_p:= \left( \sum_{i=1}^n| x_i|^p \right)^{1/p}$, where~$p \in \bZ_+$, and its weighted norm with $A \succ 0$ is denoted by $|x|_A:= (x^\top A x)^{1/2}$.  A continuous function $\alpha:[0,a) \to \bR_+$ is said to be of class $\cK$ if it is strictly increasing and $\alpha(0)=0$.  Moreover, it is said to be of class $\cK_\infty$ if $a = \infty$ and $\alpha (r) \to \infty$ as $r \to \infty$. A random variable $w$ is said to have a non-degenerate multivariate Gaussian distribution with the mean value $\mu \in \bR^n$ and covariance matrix $\Sigma \succ 0$, denoted by $w \sim \cN_n (\mu,\Sigma)$, if its distribution has the following probability density:
\begin{align*}
p(w;\mu,\Sigma) = \left(\frac{1}{(2 \pi)^n {\rm det}(\Sigma)}\right)^{1/2} {\rm e}^{- |w - \mu|_{ \Sigma^{-1}}^2/2}.
\end{align*}
The so called $\cQ$-function is defined by $\cQ (w) := \frac{1}{\sqrt{2\pi}}\int_w^{\infty} {\rm e}^{-\frac{v^2}{2}}dv$, where $\cQ(w) <1/2$ for $w>0$, and $\cR (\varepsilon,\delta):=(\cQ^{-1}(\delta) + \sqrt{(\cQ^{-1}(\delta))^2 + 2 \varepsilon})/2\varepsilon$.

\section{Differential Privacy Analysis}\label{DPA:s}
In this section, we study differential privacy of discrete-time linear dynamical systems from three aspects. First, we define the differential privacy of a Gaussian mechanism with output noise~\cite{DMN:06,DKM:06}; the exact definition of a mechanism will become clear later. Second, we investigate the differential privacy of the mechanism in terms of observability. Last, we analyze the differential privacy of the mechanism with input noise. Throughout the paper, we follow the convention by focusing on a finite data sets. In a dynamical system setting, this corresponds to analyzing the system's properties within a finite time.

Consider the following discrete-time linear system:
\begin{align}
\left\{\begin{array}{l}
x(t+1) = A x(t) + B u(t), \\
y(t) = C x(t) + D u(t), 
\end{array}\right. \label{sys}
\end{align}
for $t \in \bZ_+$, where $x(t)\in\bR^n$, $u(t) \in \bR^m$ and $y(t) \in \bR^q$ denote the state, input and output, respectively, and $A \in \bR^{n \times n}$, $B \in \bR^{n \times m}$, $C \in \bR^{q \times n}$ and $D \in \bR^{q \times m}$. 

For (\ref{sys}), the output sequence $Y_t \in \bR^{(t+1)q}$ is described by
\begin{align}
Y_t = O_t x_0 + N_t U_t, \label{Outseq}
\end{align}
where $O_t \in \bR^{(t+1)q \times n}$ and $N_t \in \bR^{(t+1)q \times (t+1)m}$ are
\begin{align}
&O_t:=\left[\begin{array}{cccc}
C^\top & CA^\top & \cdots & (CA^t)^\top
\end{array}\right]^\top, \label{Obmatrix}\\ 
&N_t:=\left[\begin{array}{cccccc}
D & 0 & \cdots & \cdots &  0\\
CB & D & \ddots &  & \vdots \\
CAB & CB & D &\ddots & \vdots \\
\vdots & \vdots & \ddots & \ddots &0 \\
CA^{t-1}B & CA^{t-2}B & \cdots & CB & D
\end{array}\right]. \label{Markov}
\end{align}
To facilitate future discussion, we also denote the left $(t+1)q$ by $(T+1)m$ submatrix of $N_t$ by $N_{t,T}$, $T\le t$.

\begin{secrem}
If~$[\begin{matrix}O_t & N_t\end{matrix}]=0$, then~$Y_t$ is identically zero. In this pathological case, there is no reason to proceed with privacy analysis, and thus throughout the paper we assume that~$[\begin{matrix}O_t & N_t\end{matrix}]\neq 0$. 
\red
\end{secrem}

\subsection{Differential Privacy With Output Noise}~\label{IO:s}
To proceed with differential privacy analysis, we consider the output~$y_w(t):= y(t) + w(t)$ after adding the noise $w(t) \in \bR^q$.
From (\ref{Outseq}), $Y_{w,t} \in \bR^{(t+1)q}$ can be described by
\begin{align}
Y_{w,t} = O_t x_0 + N_t U_t + W_t. \label{Out_noise}
\end{align}
This defines a mapping $\cM :\bR^n \times \bR^{(t+1)m} \times \bR^{(t+1)q} \ni (x_0,U_t,W_t) \mapsto Y_{w,t} \in \bR^{(t+1)q}$. In differential privacy analysis, this mapping is called a \emph{mechanism}~\cite{DMN:06,DKM:06}.  

It is worth clarifying that the input of the dynamical system~(\ref{sys}) is~$u$ while the input data of the induced mechanism~(\ref{Out_noise}) is $(x_0,U_t)$. 
\begin{secrem}
Depending on specific applications, $x_0$ and $U_t$ do not need to be private at the same time. Our results can be readily extended to the scenario where one of $x_0$ and $U_t$ is confidential, and the other is public.
\red
\end{secrem}

Differential privacy gives an index of the privacy level of a mechanism, which is characterized by the sensitivity of the published output data $Y_{w,t}$ with respect to the input data $(x_0,U_t)$. More specifically, if for a pair of not so distinct input data $((x_0,U_t),(x'_0,U'_t))$, the corresponding pair of output data $(Y_{w,t},Y'_{w,t})$ are very different, then one can conclude that input data are easy to identify, i.e. the mechanism is less private. Thus, differential privacy is defined using  a pair of different but ``similar'' input data, where by similar we mean that the pair satisfies the following adjacency relations.  
\begin{secdefn}\label{SBRU:def}
Given~$c>0$ and~$p \in \bZ_+$, a pair of input data~$((x_0,U_t),(x'_0,U'_t)) \allowbreak \in (\bR^n \times \bR^{(t+1)m}) \times (\bR^n \times\bR^{(t+1)m})$ is said to belong to the binary relation \emph{$c$-adjacency} under the $p$ norm if~$| [x_0; U_t] - [x'_0; U'_t]  |_p  \le c$. The set of all pairs of the input data that are $c$-adjacent under the $p$ norm is denoted by~${\rm Adj}_p^c$.
\red
\end{secdefn}

The magnitude of~$c$ gives an upper bound on the difference of the pair of input data $(x_0,U_t)$ and $(x'_0,U'_t)$. Therefore, $c$ can be chosen according to the knowledge of the range or distribution of input data.

Now, we are ready to define differential privacy of the mechanism~\eqref{Out_noise}.
\begin{secdefn}\label{Diffprv:def}
Let ~$(\bR^{(t+1)q},\cF,\bP )$ be a probability space. The mechanism~\eqref{Out_noise} is said to be \emph{$(\varepsilon,\delta)$-differentially private} for ${\rm Adj}_p^c$ at a finite time instant~$t\in \bZ_+$ if there exist $\varepsilon>0$ and $\delta \ge 0$ such that 
\begin{align}
&\bP (O_t x_0 + N_t U_t + W_t \in \cS) \nonumber\\
&\le {\rm e}^{\varepsilon} \bP (O_t x'_0 + N_t U'_t + W_t \in \cS) + \delta, \ \forall \cS \in \cF \label{diffprv}
\end{align} 
for any $((x_0,U_t),(x'_0,U'_t)) \in {\rm Adj}_p^c$.
\red
\end{secdefn}

\begin{secrem}
There are two minor differences between Definition~\ref{SBRU:def} and the symmetric binary relation in~\cite{NP:14}. In~\cite{NP:14}, it is assumed that~$x_0 = x'_0$ in the binary relation and the pair of input sequences~$(U_t,U'_t)$ are the same except for one element in the sequence, which is a special case of Definition~\ref{SBRU:def}.
Our definition of differential privacy is a direct extension of the original one~\cite{DMN:06,DKM:06} and slightly different from that defined for linear dynamical systems in~\cite{NP:14}; our definition depends on the initial state in addition to the input sequence, and $W_t$ is not necessarily causal. 
\red
\end{secrem}

If $\varepsilon$ and $\delta$ are large, then for a different pair of input data $((x_0,U_t),(x'_0, U'_t))$, the corresponding probability distributions of output data $(Y_{w,t},Y'_{w,t})$ can be very different, i.e., a mechanism is less private. Therefore, the privacy level of a mechanism can be evaluated by the pair of variables $\varepsilon$ and $\delta$. From its definition, one notices that if a mechanism is $(\varepsilon_1,\delta_1)$-differentially private, then it is $(\varepsilon_2,\delta_2)$-differentially private for any $\varepsilon_2 \ge \varepsilon_1$ and $\delta_2 \ge \delta_1$. Therefore, $\varepsilon$ and $\delta$ give a lower bound on the privacy level, where larger $\varepsilon$ and $\delta$ imply lower privacy levels. 

As is clear from the definition, $\varepsilon$ and $\delta$ also depend on noise. In fact, we will show that the sensitivity of the dynamical system~(\ref{sys}) provides the lower bound on the covariance matrix for the multivariate Gaussian noise to achieve $(\varepsilon,\delta)$-differential privacy, which is a generalization of~\cite[Theorem~3]{NP:14,DKM:06}. In what follows, we call a mechanism with the Gaussian noise a \emph{Gaussian mechanism}.
\begin{secthm}\label{dpr:thm}
The Gaussian mechanism (\ref{Out_noise}) induced by $W_t \sim \cN_{(t+1)q}(\mu,\Sigma)$ is $(\varepsilon,\delta)$-differentially private for ${\rm Adj}_2^c$ at a finite time $t \in \bZ_+$ with $\varepsilon > 0$ and $1/2 > \delta > 0$ if the covariance matrix $\Sigma \succ 0$ is chosen such that
\begin{align}
\lambda_{\max}^{-1/2}\left(\cO_{\Sigma,t}\right) \ge c \cR (\varepsilon,\delta),\label{ep}
\end{align}
where
\begin{align}
\cO_{\Sigma,t}:=\left[\begin{array}{cc}
O_t & N_t
\end{array}\right]^\top \Sigma^{-1}\left[\begin{array}{cc}
O_t & N_t
\end{array}\right]. \label{inpGtt}
\end{align}
\end{secthm}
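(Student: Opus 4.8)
The plan is to reduce the statement to the classical analysis of the Gaussian mechanism for static data, carried out at the level of the stacked map $M_t := [\,O_t \ \ N_t\,]$ acting on the input datum $v := [x_0; U_t]$. Writing $v' := [x'_0; U'_t]$ for the adjacent datum, note that under $W_t \sim \cN_{(t+1)q}(\mu,\Sigma)$ the output $Y_{w,t} = M_t v + W_t$ is distributed as $\cN_{(t+1)q}(M_t v + \mu,\Sigma)$, and likewise the output generated by $v'$ is $\cN_{(t+1)q}(M_t v' + \mu,\Sigma)$. Thus (\ref{diffprv}) is an inequality between two nondegenerate Gaussians with a common covariance whose means differ by $M_t(v-v')$; crucially, $\Sigma \succ 0$ makes both densities strictly positive everywhere, and $M_t \neq 0$ by the standing assumption, so $\cO_{\Sigma,t} = M_t^\top \Sigma^{-1} M_t$ has a strictly positive maximum eigenvalue.

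First I would introduce the privacy loss variable $L(\xi) := \ln\bigl(p(\xi; M_t v+\mu,\Sigma)/p(\xi; M_t v'+\mu,\Sigma)\bigr)$ and compute it directly from the Gaussian density in the Notations: the purely quadratic terms in $\xi$ cancel, so $L$ is affine in $\xi$. Evaluating its first two moments under $\xi \sim \cN_{(t+1)q}(M_t v+\mu,\Sigma)$ — the law of the ``true'' output — shows that $L$ is itself Gaussian with mean $s^2/2$ and variance $s^2$, where $s^2 := |v-v'|_{\cO_{\Sigma,t}}^2 = (v-v')^\top \cO_{\Sigma,t}(v-v')$. If $M_t(v-v')=0$ the two output laws coincide and (\ref{diffprv}) holds trivially, so we may assume $s>0$.

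Next I would invoke the standard tail-bound reduction: for any $\cS \in \cF$, splitting $\cS$ into its intersections with $\{L \le \varepsilon\}$ and $\{L > \varepsilon\}$ and using $p(\xi; M_t v+\mu,\Sigma) \le {\rm e}^{\varepsilon}\, p(\xi; M_t v'+\mu,\Sigma)$ on the former set yields $\bP(O_t x_0+N_t U_t+W_t \in \cS) \le {\rm e}^{\varepsilon}\,\bP(O_t x'_0+N_t U'_t+W_t \in \cS) + \bP(L > \varepsilon)$. It therefore suffices to guarantee $\bP(L>\varepsilon) \le \delta$ uniformly over ${\rm Adj}_2^c$. Since $L \sim \cN_1(s^2/2,\, s^2)$, standardizing gives $\bP(L>\varepsilon) = \cQ(\varepsilon/s - s/2)$, and because $\cQ$ is strictly decreasing with $\cQ^{-1}(\delta) > 0$ for $\delta \in (0,1/2)$, the requirement $\cQ(\varepsilon/s - s/2) \le \delta$ is equivalent to the scalar quadratic inequality $s^2 + 2\cQ^{-1}(\delta)\,s - 2\varepsilon \le 0$, i.e., to $0 < s \le 1/\cR(\varepsilon,\delta)$ after rationalizing the (unique) positive root of the quadratic — which is precisely where the closed-form expression for $\cR(\varepsilon,\delta)$ originates.

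Finally, on ${\rm Adj}_2^c$ we have $s^2 = (v-v')^\top \cO_{\Sigma,t}(v-v') \le \lambda_{\max}(\cO_{\Sigma,t})\,|v-v'|_2^2 \le c^2\,\lambda_{\max}(\cO_{\Sigma,t})$, hence $s \le c\,\lambda_{\max}^{1/2}(\cO_{\Sigma,t})$, and so $s \le 1/\cR(\varepsilon,\delta)$ holds for every adjacent pair as soon as $c\,\lambda_{\max}^{1/2}(\cO_{\Sigma,t}) \le 1/\cR(\varepsilon,\delta)$, which is exactly (\ref{ep}). I expect the only genuine subtleties to be the privacy-loss moment computation and the bookkeeping of the degenerate case $s=0$ together with the uniformity of the bound across all of ${\rm Adj}_2^c$; the remaining steps are routine Gaussian algebra and the elementary quadratic inequality.
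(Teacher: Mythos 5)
Your proposal is correct and follows essentially the same route as the paper: the paper's proof invokes the argument of \cite[Theorem~3]{NP:14} to obtain the tail-bound $\bP(\tilde W \ge \varepsilon z - 1/2z)$ with $z = 1/s$, reduces $(\varepsilon,\delta)$-privacy to $\cQ(\varepsilon z - \tfrac{1}{2z}) \le \delta$, i.e.\ $z \ge \cR(\varepsilon,\delta)$, and then bounds $z^{-1} \le c\,\lambda_{\max}^{1/2}(\cO_{\Sigma,t})$ exactly as you do. The only difference is that you spell out the privacy-loss moment computation and the quadratic-root derivation of $\cR$ that the paper delegates to the citation; all of those details check out.
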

\begin{proof}
Using a similar argument as in the proof for~\cite[Theorem~3]{NP:14}, for arbitrary $\varepsilon >0$, one has 
\begin{align*}
&\bP (O_t x_0 +  N_t U_t + W_t \in \cS)\\
&\le {\rm e}^{\varepsilon} \bP (O_t x'_0 +  N_t U'_t + W_t \in \cS) + \bP \left(\tilde W \ge \varepsilon z - 1/2z \right),\end{align*}
where
\begin{align*}
z:=|O_t (x'_0 - x_0) + N_t (U'_t - U_t) |_{ \Sigma^{-1}}^{-1},
\end{align*}
and $\tilde W \sim \cN (0,1)$. Then, the mechanism is $(\varepsilon,\delta)$-differentially private if $\cQ\left(\varepsilon z - \frac{1}{2z}\right) \le \delta$, i.e. 
\begin{align}
z \ge \cR (\varepsilon,\delta), \label{z,delta}
\end{align}
for any $((x_0,U_t),(x'_0,U'_t)) \in {\rm Adj}_2^c$. The inequality (\ref{z,delta}) holds if (\ref{ep}) is satisfied because
\begin{align*}
z^{-1} =& |O_t (x'_0 - x_0) + N_t (U'_t - U_t) |_{ \Sigma^{-1}} \le c \lambda_{\max}^{1/2}\left(\cO_{\Sigma,t}\right).
\end{align*}
\end{proof}

In~\eqref{ep}, only the matrix~$\left[\begin{array}{cc}O_t & N_t\end{array}\right]$ depends on the system dynamics~(\ref{sys}). We will analyze this matrix in terms of system~(\ref{sys})'s input observability in the next subsection. When the initial state (resp. input sequence) is public, the condition~\eqref{ep} can be replaced by $\lambda_{\max}^{-1/2}(N_t^\top \Sigma^{-1} N_t ) \ge c \cR (\varepsilon,\delta)$ (resp. $\lambda_{\max}^{-1/2} ( O_t^\top \Sigma^{-1} O_t ) \ge c \cR (\varepsilon,\delta)$). The matrix~$\cO_{\Sigma,t}$ defined in~\eqref{inpGtt} is in fact the Fisher information matrix of~$Y_t$ with respect to~$[x_0 ; \; U_t]$. Therefore, Theorem~\ref{dpr:thm} connects differential privacy with Fisher information.

From~\eqref{inpGtt},~$\lambda_{\max}^{1/2}(\cO_{\Sigma,t})$ is the~$2$-induced matrix norm of~$\Sigma^{-1/2}[O_t \; N_t]$, denoted by~$|\Sigma^{-1/2}[O_t \; N_t]|_2$. This can be upper bounded as follows.
\begin{align*}
\lambda_{\max}^{1/2}(\cO_{\Sigma,t})&=\left| \Sigma^{-1/2} \begin{bmatrix} O_t & N_t \end{bmatrix} \right|_2\\
& \le \left| \Sigma^{-1/2} \right|_2 \left| \begin{bmatrix}O_t & N_t \end{bmatrix} \right|_2\\
&= \lambda_{\min}^{-1/2} (\Sigma)  \lambda_{\max}^{1/2}\left(\cO_{I_{(t+1)q},t}\right),
\end{align*}
and consequently,
\begin{align}
\lambda_{\max}^{-1/2}(\cO_{\Sigma,t}) \ge \lambda_{\min}^{1/2} (\Sigma)  \lambda_{\max}^{-1/2}\left(\cO_{I_{(t+1)q},t}\right).
\label{ep_sufficient1}
\end{align}
Therefore, for any given $c$, $\varepsilon>0$ and $1/2 > \delta > 0$, one can make the Gaussian mechanism $(\varepsilon,\delta)$-differentially private if one makes the minimum eigenvalue of the covariance matrix~$\Sigma$ sufficiently large such that
\begin{align}
\lambda_{\min}^{1/2} (\Sigma) \ge c \lambda_{\max}^{1/2}\left(\cO_{I_{(t+1)q},t}\right) \cR (\varepsilon,\delta)
\label{ep_sufficient2}
\end{align}
because~\eqref{ep_sufficient1} and~\eqref{ep_sufficient2} imply~(\ref{ep}). In the special case where $\Sigma =\sigma^2 I_{(t+1)q}$, $\sigma>0$ (an i.i.d. Gaussian noise), \eqref{ep_sufficient2} becomes
\begin{align}
\sigma \ge c \lambda_{\max}^{1/2}\left(\cO_{I_{(t+1)q},t}\right) \cR (\varepsilon,\delta) .
\label{sigma}
\end{align}
Still one can design~$\sigma$ to make the Gaussian mechanism $(\varepsilon,\delta)$-differentially private for arbitrary $\varepsilon>0$ and $1/2>\delta>0$.

\begin{secrem}\label{Laplace:rem}
One can also extend~\cite[Theorem~2]{NP:14} to use the i.i.d. Laplace noise in our problem setting. However, the extension to the multivariate Laplace noise is not easy because this involves the computation of the modified Bessel function of the second kind. Let $w_i(t)$, $i=1,\dots,q$, $t \in \bZ_+$ be an i.i.d. Laplace noise with the variance $\mu \in \bR$ and distribution $b >0$.  Then, the Laplace mechanism (\ref{Out_noise}) is $(\varepsilon,0)$-differentially private at a finite time $t$ with $\varepsilon > 0$ if
\begin{align*}
b \ge  c \left| \left[\begin{array}{cc}
O_t & N_t
\end{array}\right]\right|_1 /\varepsilon,
\end{align*}
for any $((x_0,U_t),(x'_0,U'_t)) \in {\rm Adj}_1^c$, where $|A|_1:=\max_j \sum_i |a_{i,j}|$ is the induced matrix $1$-norm. As for the Gaussian mechanism, the induced matrix norm of~$[O_t \; N_t]$ plays a crucial role for the Laplace mechanism too. In the next subsection, we study its~$2$-norm in terms of system~(\ref{sys})'s input observability. Because of the equivalence of induced matrix norms, the observation for the~$2$-norm is applicable to an arbitrary norm including the~$1$-norm.
\red
\end{secrem}

\begin{secrem}
In this subsection, to make the input data private, noise is added to the output data, which makes the output data also private. To analyze the differential privacy level of the output data, one can employ the conventional results for a static data set in~\cite{DMN:06,DKM:06}. By adding a sufficiently large noise, it is possible to achieve the differential privacy requirements for the input data and output data at the same time. \red
\end{secrem}

Note that in Theorem~\ref{dpr:thm}, the system~(\ref{sys}) is not necessarily stable. Now, we focus on asymptotically stable systems. Then, one can characterize the differentially privacy level in terms of the $H_{\infty}$-norm and the observability Gramian, where the $H_{\infty}$-norm of the system~(\ref{sys}) is the infimum non-negative constant $\gamma$ satisfying
\begin{align*}
\sum_{\tau =0}^t |y(\tau )|_2^2 \le \gamma^2 \sum_{\tau =0}^t |u(\tau)|_2^2, \ \forall t \in \bZ_+, 
\end{align*}
for all $L_2$-bounded input signals, and the observability Gramian is
\begin{align}
 \cO_{\infty}:= O_{\infty}^\top O_{\infty} = \sum_{t =0}^{\infty} (CA^t)^\top(CA^t), \label{ob_gram}
\end{align}
where $O_t$ is defined in (\ref{Obmatrix}). Note that $\lambda_{\max}(O_t^\top O_t)$ is non-decreasing with $t \in \bZ_+$, and for the asymptotically stable system, $\cO_{\infty}$ is finite. Now, we obtain the following result as a corollary of Theorem~\ref{dpr:thm}. 

\begin{seccor}\label{dpr_sta:thm}
The Gaussian mechanism (\ref{Out_noise}) induced by an asymptotically stable system (\ref{sys}) and $W_t \sim \cN_{(t+1)q}(\mu,\Sigma)$ is $(\varepsilon,\delta)$-differentially private for ${\rm Adj}_2^c$ at a finite time $t\in \bZ_+$ with $\varepsilon > 0$ and $1/2 > \delta > 0$ if the covariance matrix $\Sigma \succ 0$ is chosen such that the following inequality holds
\begin{align}
\lambda_{\min}^{1/2}(\Sigma ) \ge
c  \left(\lambda_{\max}^{1/2}(\cO_{\infty}) +\gamma \right) \cR (\varepsilon,\delta).\label{ep_Hinf}
\end{align}
\end{seccor}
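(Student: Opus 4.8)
The plan is to obtain this as a direct consequence of Theorem~\ref{dpr:thm}, using the chain of sufficient conditions already established in the text: since~\eqref{ep_sufficient1}–\eqref{ep_sufficient2} show that~\eqref{ep_sufficient2} implies~\eqref{ep}, and~\eqref{ep} is exactly the hypothesis of Theorem~\ref{dpr:thm}, it is enough to prove that the finite-horizon quantity
\begin{align*}
\lambda_{\max}^{1/2}\!\left(\cO_{I_{(t+1)q},t}\right)=\left|\begin{bmatrix}O_t & N_t\end{bmatrix}\right|_2
\end{align*}
is bounded, uniformly in $t\in\bZ_+$, by $\lambda_{\max}^{1/2}(\cO_{\infty})+\gamma$. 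Here asymptotic stability is used precisely to guarantee that both $\cO_\infty$ (the limit of the monotone partial sums in~\eqref{ob_gram}) and $\gamma$ (the $H_\infty$-norm) are finite, so that the right-hand side of~\eqref{ep_Hinf} is well defined.

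I would first split $\begin{bmatrix}O_t & N_t\end{bmatrix}=\begin{bmatrix}O_t & 0\end{bmatrix}+\begin{bmatrix}0 & N_t\end{bmatrix}$ and apply the triangle inequality for the $2$-induced norm, noting that padding with a zero block does not change the norm, so that $\big|\begin{bmatrix}O_t & N_t\end{bmatrix}\big|_2\le |O_t|_2+|N_t|_2$. For the first term, $|O_t|_2=\lambda_{\max}^{1/2}(O_t^\top O_t)$, and because $O_t^\top O_t=\sum_{\tau=0}^t (CA^\tau)^\top(CA^\tau)$ is non-decreasing in $t$ in the positive-semidefinite order with limit $\cO_\infty$, we have $O_t^\top O_t\preceq\cO_\infty$ and hence $|O_t|_2\le\lambda_{\max}^{1/2}(\cO_\infty)$.

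For the second term I would identify $N_t$ as the finite-horizon input–output map of~(\ref{sys}): setting $x_0=0$ in~(\ref{Outseq}) gives $Y_t=N_tU_t$, i.e. $|N_tU_t|_2^2=\sum_{\tau=0}^{t}|y(\tau)|_2^2$. Extending an arbitrary $U_t$ to an $L_2$-bounded input by zeros for $\tau>t$ and invoking the definition of the $H_\infty$-norm yields $\sum_{\tau=0}^{t}|y(\tau)|_2^2\le\gamma^2\sum_{\tau=0}^{t}|u(\tau)|_2^2=\gamma^2|U_t|_2^2$, so $|N_t|_2\le\gamma$. Combining the three estimates gives $\lambda_{\max}^{1/2}(\cO_{I_{(t+1)q},t})\le\lambda_{\max}^{1/2}(\cO_\infty)+\gamma$; therefore~\eqref{ep_Hinf} implies~\eqref{ep_sufficient2}, hence~\eqref{ep}, and Theorem~\ref{dpr:thm} delivers $(\varepsilon,\delta)$-differential privacy.

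I expect the main obstacle to be the clean justification of $|N_t|_2\le\gamma$, namely that the norm of the \emph{truncated} Toeplitz operator is dominated by the infinite-horizon $H_\infty$-gain; one must be careful that the zero-padding argument is legitimate and that causality makes $\sum_{\tau=0}^t|y(\tau)|_2^2$ depend only on $U_t$. The triangle-inequality split is the other delicate point, since it is exactly what produces the additive form $\lambda_{\max}^{1/2}(\cO_\infty)+\gamma$ (a possibly conservative bound) rather than a tighter joint estimate of $\big|\begin{bmatrix}O_t & N_t\end{bmatrix}\big|_2$.
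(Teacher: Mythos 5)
Your proposal is correct and follows essentially the same route as the paper: a triangle-inequality split of the contribution of $O_t$ and $N_t$, with $|O_t|_2$ bounded by $\lambda_{\max}^{1/2}(\cO_\infty)$ via monotonicity of the partial Gramians and $|N_t|_2$ bounded by $\gamma$ via the finite-sum definition of the $H_\infty$-norm, followed by extraction of $\lambda_{\min}^{1/2}(\Sigma)$. The only cosmetic difference is that you bound the induced norm of $[\begin{matrix}O_t & N_t\end{matrix}]$ and pass through \eqref{ep_sufficient2}, whereas the paper applies the same estimates directly to the vector $O_t(x'_0-x_0)+N_t(U'_t-U_t)$ in the $\Sigma^{-1}$-weighted norm and concludes via \eqref{z,delta}.
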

\begin{proof}
It holds that
\begin{align*}
&|O_t (x'_0 - x_0) + N_t (U'_t - U_t) |_{ \Sigma^{-1}} \\
&\le |O_t (x'_0 - x_0)|_{ \Sigma^{-1}} + |N_t (U'_t - U_t) |_{ \Sigma^{-1}}\\
&\le \lambda_{\max}^{1/2}(\Sigma^{-1})(|O_t (x'_0 - x_0)|_2 + |N_t (U'_t - U_t) |_2)\\
& \le c \lambda_{\max}^{1/2}(\Sigma^{-1})  \left( \lambda_{\max}^{1/2}(\cO_{\infty}) + \gamma \right).
\end{align*}
Therefore, (\ref{ep_Hinf}) implies (\ref{z,delta}), where $1/\lambda_{\max}(\Sigma^{-1}) = \lambda_{\min}(\Sigma)$ is used.
\end{proof}

If $x_0$ is public and the multivariate Gaussian is i.i.d, Corollary~\ref{dpr_sta:thm} reduces to~\cite[Corollary~1]{NP:14}. When the initial state (resp. input sequence) is public,~the condition~\eqref{ep_Hinf} can be replaced by $\lambda_{\min}^{1/2}(\Sigma ) \ge c  \gamma \cR (\varepsilon,\delta)$ (resp. $\lambda_{\min}^{1/2}(\Sigma ) \ge c  \lambda_{\max}^{1/2}(\cO_{\infty}) \cR (\varepsilon,\delta)$). From the proof, one notices that for an asymptotically stable system~(\ref{sys}), if the covariance matrix $\Sigma$ is chosen such that (\ref{ep_Hinf}) holds, then (\ref{ep}) holds for any $t \in \bZ_+$. That is, for any asymptotically stable system~\eqref{sys} and for any $\varepsilon > 0$ and $1/2 > \delta > 0$, there exists a non-degenerate multivariate Gaussian noise which makes the induced mechanism $(\varepsilon,\delta)$-differentially private for any $t \in \bZ_+$. However, this is not always true for unstable systems; a similar statement can be found in~\cite[Theorem~4.5]{CDE:16}.

\subsection{Connection with Strong Input Observability}~\label{IOIID:s}
In the previous subsection, we have studied the $(\varepsilon,\delta)$-differential privacy of a Gaussian mechanism induced by output noise. However, it is not intuitively clear how differential privacy relates to dynamical systems' other intrinsic properties. For differential privacy, noise is designed to prevent the initial state and input sequence from being identified from the published output sequence. From the systems and control point of view, the property of determining the initial state and input sequence can be interpreted as observability or left invertibility~\cite{HP:98,SM:69}. In this subsection, we study the Gaussian mechanism from the input observability perspective.

First, we define what we mean by strong input observability.
\begin{secdefn}\label{sio:def}
The system~(\ref{sys}) is said to be \emph{strongly  input observable} if there exists $T \in \bZ_+$ such that both the initial state~$x_0\in \bR^n$ and initial input $u(0) \in \bR^m$ can be uniquely determined from the measured output sequence $Y_T$.
\red
\end{secdefn}

It is worth mentioning that if $(x_0,u(0))$ is uniquely determined from $Y_T$, then $(x(k),u(k))$ is consequently uniquely determined from $Y_{T+k}$,~$k=1,2,\dots$. Hence, one can focus on $(x_0,u(0))$ in the definition of strong input observability. Note that although strong input observability may seem too strong to hold for many existing engineering systems, more emerging and future systems may very likely possess this property after more sensed data and communicated information become available.

\begin{secrem}
There are several similar but different concepts from strong input observability just defined. On the one hand, if $U_T$ is known, the analysis reduces to determining the initial state $x_0$, i.e, the standard observability analysis~\cite{Kailath:80}. When $U_T$ is unknown, the property that $x_0$ can be uniquely determined is called unknown-input (or strong) observability~\cite{Kratz:95}. On the other hand, if $x_0$ is known, the analysis reduces to determining the initial input $u(0)$; this property is called input observability with the known initial state $x_0$~\cite{HP:98} or left invertibility~\cite{SM:69}. In the case, for the unknown initial state~$x_0$, the property that the initial input $u(0)$ can be uniquely determined is called input observability~\cite{HP:98}. Therefore, our strong input observability requires both unknown-input (or strong) observability and input observability. 
\red
\end{secrem}

The results in the existing observability analysis are helpful for the strong input observability analysis. Especially, by extending~\cite[Theorem~3]{SM:69}, we have the following necessary and sufficient condition for strong input observability. Since the proof is similar, it is omitted.
\begin{secthm}\label{rank:thm}
The system~(\ref{sys}) is strongly input observable if and only if 
\begin{align}
{\rm rank}\left[\begin{array}{cc}
O_{2n} & N_{2n,n}
\end{array}\right] = n+ (n+1)m \label{Kal_iob}
\end{align}
for $O_t$ in (\ref{Obmatrix}) and the submatrix $N_{t,T}$ of $N_t$ in (\ref{Markov}), i.e., the matrix $[O_{2n} \; N_{2n,n}]$, has the column full rank. 
\red
\end{secthm}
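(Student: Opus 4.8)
## Proof Proposal

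The plan is to reduce strong input observability to a rank condition on a finite-horizon matrix by showing that (i) strong input observability at \emph{some} time $T$ is equivalent to strong input observability at the specific time $T = 2n$, and (ii) the latter is equivalent to the column full rank of $[O_{2n}\; N_{2n,n}]$. The starting point is the observation that $(x_0, u(0))$ is uniquely determined from $Y_T$ if and only if the map $(x_0, U_T) \mapsto Y_T$ is injective \emph{after projecting out} the components $u(1), \dots, u(T)$ — more precisely, $(x_0,u(0))$ is recoverable from $Y_T = O_T x_0 + N_T U_T$ for \emph{every} choice of the tail $(u(1),\dots,u(T))$ if and only if no nonzero $(x_0, u(0))$ can be masked by some tail, i.e. the matrix $[O_T \; N_{T,0}]$ has its columns linearly independent modulo the column space of the remaining block of $N_T$. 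I would first rewrite this as the plain rank statement $\operatorname{rank}[O_T \; N_T] = \operatorname{rank}[\,\text{last }Tm\text{ columns of }N_T\,] + n + m$, which is the natural finite-time analogue of the left-invertibility rank test, and note that by the staircase structure of $N_T$ in~\eqref{Markov} this is equivalent to~\eqref{Kal_iob} with $2n$ replaced by the generic $T$.

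The core of the argument is then the horizon bound: if the rank condition holds for some $T$, it holds for $T = 2n$, and conversely. For the forward direction I would invoke a Cayley–Hamilton / shift-invariance argument in the spirit of the proof of~\cite[Theorem~3]{SM:69}: the relevant subspaces — the span of the columns of $O_t$ and the ``unobservable-under-unknown-input'' subspace generated by the Markov parameters $CA^k B$ — stabilize after at most $n$ steps for the state part and at most $n$ further steps once the input-injection geometry is taken into account, so the combined obstruction to recovering $(x_0,u(0))$ cannot first disappear beyond $t = 2n$. Concretely, one constructs from any $t > 2n$ witness $([x_0;U_t])$ in the kernel (modulo tail) a shorter witness on the horizon $2n$ by successively eliminating the highest-index input components using linear dependence relations among $A^k B$-columns that are forced by $\dim \bR^n = n$, and symmetrically eliminating redundant state-propagation rows; this is exactly the bookkeeping that Sain–Massey carry out, so citing that proof and noting ``since the proof is similar, it is omitted'' (as the statement already promises) is legitimate. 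The converse is immediate since the rank condition at $2n$ already exhibits a finite $T$.

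The step I expect to be the genuine obstacle is pinning down \emph{why $2n$ and not some smaller or larger multiple of $n$} — i.e. making the horizon bound tight rather than merely finite. The subtlety is that strong input observability bundles two sub-properties (unknown-input observability of $x_0$, which costs an $n$-step horizon, and input observability of $u(0)$, which can require up to a further $n$ steps because the zero dynamics of the pencil $\begin{bmatrix} A - \lambda I & B \\ C & D\end{bmatrix}$ live in an at-most-$n$-dimensional space), and one must verify that these two horizons \emph{add} rather than overlap in the worst case. I would handle this by passing to the Rosenbrock system matrix / transfer-function picture: strong input observability is equivalent to $\begin{bmatrix} A - \lambda I & B \\ C & D \end{bmatrix}$ having full column rank $n+m$ for all $\lambda \in \bC$ \emph{and} $[C\;\,D]$ (or the appropriate initial block) contributing the $u(0)$ directions, and the degree of the associated polynomial matrix whose non-vanishing encodes this is bounded by $n$, which after clearing denominators in the time domain yields precisely the horizon $2n$. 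Everything else — the equivalence of the injectivity formulation with the rank formulation, and the translation between $[O_{2n}\;N_{2n,n}]$ having full column rank and~\eqref{Kal_iob} — is routine linear algebra once this dimension count is in place.
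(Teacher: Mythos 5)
The paper itself gives no proof of this theorem --- it only remarks that the argument extends \cite[Theorem~3]{SM:69} and omits it --- so your proposal has to stand on its own, and its skeleton is the right one: reformulate recoverability of $(x_0,u(0))$ as the rank identity ${\rm rank}[O_T \ N_T]={\rm rank}\,\tilde N_T+n+m$ (with $\tilde N_T$ the columns multiplying $u(1),\dots,u(T)$), then establish a horizon bound. The problem is that the two steps you declare routine are exactly where the content lives. The passage from that quotient-type rank identity to the full-column-rank condition \eqref{Kal_iob} is not ``routine linear algebra by the staircase structure'': \eqref{Kal_iob} constrains only output-nulling trajectories whose input tail $u(n+1),\dots,u(2n)$ is \emph{zero}, whereas strong input observability must exclude masking of $(x_0,u(0))$ by an \emph{arbitrary} unknown tail. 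In particular, your claim that ``the converse is immediate'' is a genuine gap: from ${\rm rank}[O_{2n}\ N_{2n,n}]=n+(n+1)m$ one cannot directly conclude that a kernel element of $[O_{2n}\ N_{2n}]$ has vanishing $(x_0,u(0))$-part, since such an element generally carries a nonzero tail and only its first $n+1$ block rows reduce to a statement about $[O_n\ N_n]$, a submatrix whose rank is not controlled by \eqref{Kal_iob}.

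The device that closes both gaps --- and is implicit in the Sain--Massey-type argument the paper points to --- is the monotone recursion for the output-nulling subspaces $\mathcal{V}_k=\{x_0:\exists U_k,\ Y_k=0\}$, which satisfies $\mathcal{V}_{k+1}=\{x:\exists u,\ Cx+Du=0,\ Ax+Bu\in\mathcal{V}_k\}$ and therefore stabilizes within $n$ steps. From it, the set of maskable pairs $(x_0,u(0))$ at horizon $k$ equals $\{(x,u):Cx+Du=0,\ Ax+Bu\in\mathcal{V}_{k-1}\}$ and is constant for all sufficiently large $k$; this is the horizon bound you flag as the obstacle, and your proposed substitute --- a degree count on the Rosenbrock pencil asserting that ``the two horizons add'' to give $2n$ --- is stated but not proved, and would in any case yield the PBH-type test rather than the Kalman-rank-type test \eqref{Kal_iob}. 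The same recursion supplies the missing link to \eqref{Kal_iob} via a peeling argument: if $[x_0;U_n]\in\ker[O_{2n}\ N_{2n,n}]$ then $(x_0,u(0))$ is maskable at horizon $2n$, hence zero under strong input observability, hence $x(1)=0$ and the shifted pair $(0,u(1))$ is maskable at horizon $2n-1$, and so on down to $u(n)$ at horizon $n$; carrying out this index bookkeeping is precisely what fixes the matrix in \eqref{Kal_iob} to have $n$ more output blocks than input blocks, a dimension count your sketch never produces. As written, the proposal is an outline of what must be proved rather than a proof.
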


The following corollary is also used in this paper.
\begin{seccor}\label{rank:cor}
The system~(\ref{sys}) is strongly input observable if and only if 
\begin{align}
{\rm rank}\left[\begin{array}{cc}
O_t & N_{t,T}
\end{array}\right] = n + (T+1)m, \label{Kal_iob2}
\end{align}
for any integers $T \ge n$ and $t \ge T + n$.
\red
\end{seccor}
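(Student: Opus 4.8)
The plan is to deduce the rank characterization for arbitrary $T\ge n$ and $t\ge T+n$ from the single rank condition~\eqref{Kal_iob} of Theorem~\ref{rank:thm}. The key structural fact I would exploit is that enlarging the time horizon from $2n$ to $t$, and from $n$ to $T$, only appends rows and columns to $[\,O_{2n}\ N_{2n,n}\,]$ whose effect on the rank is controlled by standard reachability/observability arguments for linear systems. Concretely, $[\,O_t\ N_{t,T}\,]$ has $n+(T+1)m$ columns, so the stated rank is exactly column-full-rank; the corollary therefore asserts that column-full-rankness of $[\,O_{2n}\ N_{2n,n}\,]$ is equivalent to column-full-rankness of $[\,O_t\ N_{t,T}\,]$ for every admissible pair $(t,T)$.

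The first step is the easy direction combined with a monotonicity observation: if $[\,O_t\ N_{t,T}\,]$ has full column rank for some $(t,T)$ with $T\ge n$, then restricting to $T=n$ (which deletes the last $(T-n)m$ block-columns, and correspondingly the bottom block-rows that are zero outside those columns) and to $t=2n$ recovers~\eqref{Kal_iob}; one must check that deleting those trailing columns cannot destroy full column rank of the surviving submatrix, which is immediate since a submatrix of a column-full-rank matrix obtained by deleting columns is still column-full-rank, and the relevant rows can be chosen among the first $(2n+1)q$. Conversely, and this is the substantive direction, I would show that~\eqref{Kal_iob} forces full column rank of $[\,O_t\ N_{t,T}\,]$ for all $t\ge T+n$, $T\ge n$. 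Here the natural route is the reachability-type argument underlying~\cite[Theorem~3]{SM:69}: a vector $[\,\xi;\,\eta\,]$ (with $\eta=[\eta_0;\cdots;\eta_T]$) lies in the kernel of $[\,O_t\ N_{t,T}\,]$ precisely when the output of~\eqref{sys} starting from $x_0=\xi$ under the input $u(0)=\eta_0,\dots,u(T)=\eta_T$, $u(k)=0$ for $k>T$, vanishes identically on $[0,t]$. Strong input observability (equivalently~\eqref{Kal_iob}) says $(\xi,\eta_0)=0$; then an induction on $k$, peeling off $u(k)$ one step at a time and re-invoking the $T=n$, $t=2n$ condition shifted in time, yields $\eta_k=0$ for all $k\le T$, so the kernel is trivial. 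The horizon $t\ge T+n$ is exactly what is needed so that after shifting the window by $T$ steps one still has $2n$ samples available to apply Theorem~\ref{rank:thm}.

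The main obstacle I anticipate is making the ``shift and re-apply'' induction rigorous: one has to verify that after $k$ peeling steps the residual data still corresponds to a genuine trajectory of~\eqref{sys} on a window long enough ($\ge 2n$ samples) to invoke strong input observability, and to confirm that the Cayley--Hamilton-based stabilization of ranks (the fact that no new independent directions appear beyond the $2n$-step, $n$-input window) applies uniformly in $t$ and $T$. Once that bookkeeping is in place the equivalence follows, and since the argument is a routine elaboration of the proof of Theorem~\ref{rank:thm}, the details can be compressed.
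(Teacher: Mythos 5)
Your easy direction is sound and matches the paper's: delete the columns corresponding to $u(n+1),\dots,u(T)$ and then discard the row blocks of index greater than $2n$, which by Cayley--Hamilton are linear combinations of earlier ones, so full column rank of $[O_t \; N_{t,T}]$ descends to \eqref{Kal_iob}. The gap is in the substantive direction, and it is concretely in your horizon accounting. After you have peeled off $k$ steps, the residual trajectory is observed only on the window $[k,t]$, which contains $t-k+1$ output samples; to re-invoke the $(2n+1)$-sample condition of Theorem~\ref{rank:thm} at the last peeling step $k=T$ you would need $t\ge T+2n$, not $t\ge T+n$. Your claim that ``$t\ge T+n$ is exactly what is needed so that after shifting the window by $T$ steps one still has $2n$ samples available'' is arithmetically false: shifting by $T$ leaves only $n+1$ samples. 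So the induction as you describe it stalls once fewer than $2n+1$ samples remain, i.e., it cannot determine the last $n$ input blocks.

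The missing idea is to change tools for the tail. Peel one step at a time only for $k=0,\dots,T-n-1$; each such step uses the window $[k,k+2n]$, and $(T-n-1)+2n=T+n-1\le t-1$, so these windows fit. At that point $x(T-n)=0$ is known and only the $n+1$ blocks $\eta_{T-n},\dots,\eta_T$ remain; they satisfy $N_{t-k,T-k}[\eta_k;\cdots;\eta_T]=0$ with $k=T-n$ and $t-k\ge 2n$. Now use that $N_{n+j,j}$ has full column rank for every $j\le n$ whenever \eqref{Kal_iob} holds: $N_{2n,j}$ is a column submatrix of $[O_{2n}\; N_{2n,n}]$, and its row blocks of index greater than $n+j$ are Cayley--Hamilton combinations of the row blocks $j+1,\dots,n+j$, so ${\rm rank}\,N_{2n,j}={\rm rank}\,N_{n+j,j}$. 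This determines all remaining inputs simultaneously from only $n+j+1$ samples, which is exactly what $t\ge T+n$ provides. With this repair your trajectory-based argument closes; note it is a genuinely different route from the paper's, which argues directly on the ranks of the block-Toeplitz matrices via their structure and Cayley--Hamilton without passing through zero-output trajectories.
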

\begin{proof}
From the structures of $O_t$ and $N_{t,T}$, if $[O_{2n} \; N_{2n,n}]$ has the column full rank, then
\begin{align*}
{\rm rank}\left[\begin{array}{cc}
O_{2n} & N_{2n,n}
\end{array}\right]=	
{\rm rank}\left[\begin{array}{cc}
O_{2n+t} & N_{2n+t,n}
\end{array}\right] 
\end{align*}
for any $t \in \bZ_+$. Conversely, from the Cayley-Hamilton theorem~\cite{Lang:02}, if $[O_{2n+t} \; N_{2n+t,n}]$ has the column full rank  for some $t \in \bZ_+$, then (\ref{Kal_iob}) holds.
\end{proof}

The rank condition (\ref{Kal_iob}) or (\ref{Kal_iob2}) is a \emph{qualitative} criterion for strong input observability, but differential privacy is a \emph{quantitative} criterion. A connection between these two concepts can be established by extending the concept of the observability Gramian to strong input observability because controllability and observability Gramians give both quantitative and qualitative criteria. To extend the concept of the Gramian, we consider a weighted least square estimation problem\footnote{Note that the controllability Gramian is originally obtained from the minimum energy control problem~\cite{Kalman:60}. The duals of the controllability Gramian and minimum energy control problem are respectively the observability Gramian and least square estimation problem of the initial state.} of the initial state $x_0$ and input sequences $U_T$, $T \ge n$, from the output sequence with the measurement noise $Y_{w,t}$, $t \ge T + n$, under the technical assumption $u(\tau) =0$, $t \ge \tau > T$:
\begin{align}
J_{(x_0,U_T)} = \min_{(x_0,U_T) \in \bR^n \times \bR^{(T+1)m}} | Y_{w,t} - O_t x_0 - N_{t,T} U_T |_{\Sigma^{-1}}^2. \label{lstEs}
\end{align}
This problem has a unique solution if (\ref{Kal_iob2}) holds, i.e., the system is strongly input observable, in which case the solution is
\begin{align}
\left[\begin{array}{c}
\hat x_0 \\ \hat U_T
\end{array}\right] 
= \left(\cO_{\Sigma,t,T}\right)^{-1} \left[\begin{array}{cc}
O_t & N_{t,T}
\end{array}\right]^\top \Sigma^{-1} Y_{w,t}, \label{sol_least}
\end{align}
where
\begin{align}
\cO_{\Sigma,t,T}:=\left[\begin{array}{cc}
O_t & N_{t,T}
\end{array}\right]^\top \Sigma^{-1}\left[\begin{array}{cc}
O_t & N_{t,T}
\end{array}\right].\label{inpG}
\end{align}
When there is no measurement noise, i.e., $W_T=0$, it follows that (\ref{sol_least}) gives the actual initial state and input sequence. 

One notices that~$\cO_{\Sigma,t,t}=\cO_{\Sigma,t}$ for~$\cO_{\Sigma,t}$ in~\eqref{inpGtt}. As for~$\cO_{\Sigma,t}$, the matrix~$\cO_{\Sigma,t,T}$ characterizes the differential privacy level of a Gaussian mechanism, which we state as a corollary of Theorem~\ref{dpr:thm} without the proof.

\begin{seccor}\label{dpr:cor}
Let $T \ge n$ and $t \ge T+n$. For any~$((x_0,U_t),(x'_0,U'_t))$ belonging to ${\rm Adj}_2^c$ and satisfying $u(\tau)=u'(\tau)$, $T< \tau \le t$, the Gaussian mechanism (\ref{Out_noise}) induced by $W_t \sim \cN_{(t+1)q}(\mu,\Sigma)$ is $(\varepsilon,\delta)$-differentially private at a finite time $t\in \bZ_+$ with $\varepsilon > 0$ and $1/2 > \delta > 0$, if the covariance matrix $\Sigma \succ 0$ is chosen such that
\begin{align}
\lambda_{\max}^{-1/2}\left(\cO_{\Sigma,t,T} \right) \ge c \cR (\varepsilon,\delta).  \label{ep_inpG}
\end{align}
\red
\end{seccor}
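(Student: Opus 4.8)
The plan is to reduce the statement to the chain of estimates already carried out in the proof of Theorem~\ref{dpr:thm}, the only new ingredient being the extra hypothesis that the two input sequences agree after time $T$. First I would recall from that proof that, for arbitrary $\varepsilon>0$,
\begin{align*}
\bP(O_t x_0 + N_t U_t + W_t \in \cS) \le {\rm e}^{\varepsilon}\, \bP(O_t x'_0 + N_t U'_t + W_t \in \cS) + \cQ\!\left(\varepsilon z - \tfrac{1}{2z}\right),
\end{align*}
with $z := |O_t(x'_0 - x_0) + N_t(U'_t - U_t)|_{\Sigma^{-1}}^{-1}$, so that it suffices to verify the bound $z \ge \cR(\varepsilon,\delta)$, i.e.\ (\ref{z,delta}), for every admissible adjacent pair.

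Second, I would exploit the block lower-triangular Toeplitz structure of $N_t$ in (\ref{Markov}): its $j$-th block column multiplies $u(j-1)$, so $N_t(U'_t - U_t)$ only feels the differences $u'(\tau)-u(\tau)$. Under the hypothesis $u(\tau)=u'(\tau)$ for $T<\tau\le t$, every block column with index exceeding $T+1$ drops out, whence $N_t(U'_t - U_t) = N_{t,T}(U'_T - U_T)$, where $N_{t,T}$ is the left $(T+1)m$-column submatrix of $N_t$ introduced just after (\ref{Markov}) and $U_T=[u(0);\cdots;u(T)]$ consists of the first $(T+1)m$ coordinates of $U_t$. Consequently
\begin{align*}
O_t(x'_0 - x_0) + N_t(U'_t - U_t) = \begin{bmatrix} O_t & N_{t,T}\end{bmatrix}\!\left([x_0;U_T] - [x'_0;U'_T]\right).
\end{align*}

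Third, I would estimate the $\Sigma^{-1}$-weighted norm of the right-hand side. Since $\lambda_{\max}^{1/2}(\cO_{\Sigma,t,T})$ equals the $2$-induced norm of $\Sigma^{-1/2}[\,O_t\ N_{t,T}\,]$ — exactly the observation made after Theorem~\ref{dpr:thm} for $\cO_{\Sigma,t}$, now applied to $\cO_{\Sigma,t,T}$ in (\ref{inpG}) — I obtain
\begin{align*}
z^{-1} \le \lambda_{\max}^{1/2}(\cO_{\Sigma,t,T})\,\bigl|[x_0;U_T]-[x'_0;U'_T]\bigr|_2 \le c\,\lambda_{\max}^{1/2}(\cO_{\Sigma,t,T}),
\end{align*}
where the last inequality uses that the coordinates of $[x_0;U_t]-[x'_0;U'_t]$ indexed by $\tau>T$ vanish, so $|[x_0;U_T]-[x'_0;U'_T]|_2 = |[x_0;U_t]-[x'_0;U'_t]|_2 \le c$ by $c$-adjacency under the $2$-norm. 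Combining this with hypothesis (\ref{ep_inpG}) gives $z^{-1}\le 1/\cR(\varepsilon,\delta)$, i.e.\ $z\ge\cR(\varepsilon,\delta)$, which is (\ref{z,delta}); the $(\varepsilon,\delta)$-differential privacy then follows verbatim as in Theorem~\ref{dpr:thm}.

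The only point requiring care — a bookkeeping matter rather than a conceptual obstacle — is making the cancellation $N_t(U'_t-U_t)=N_{t,T}(U'_T-U_T)$ fully rigorous from the explicit Toeplitz form of $N_t$ and keeping the block-index conventions consistent (in particular that $U_T$ is genuinely the leading $(T+1)m$-block of $U_t$, and that $t\ge T+n$, $T\ge n$ guarantee $\cO_{\Sigma,t,T}$ is invertible via Corollary~\ref{rank:cor} when one wants a well-posed estimator interpretation, though invertibility is not actually needed for the privacy bound itself). Everything else is a direct transcription of the Theorem~\ref{dpr:thm} argument, which is presumably why the authors omit the proof.
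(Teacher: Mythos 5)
Your proof is correct and follows exactly the route the paper intends: the authors state Corollary~\ref{dpr:cor} ``without the proof'' precisely because it is the Theorem~\ref{dpr:thm} argument with $N_t(U'_t-U_t)$ collapsing to $N_{t,T}(U'_T-U_T)$ under the hypothesis $u(\tau)=u'(\tau)$ for $T<\tau\le t$, which is the reduction you carry out. Your side remarks (that adjacency of the truncated vectors is inherited, and that invertibility of $\cO_{\Sigma,t,T}$ is not needed for the privacy bound) are also accurate.
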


Notice that if $T=t$, (\ref{ep_inpG}) is equivalent to (\ref{ep}). From~(\ref{ep_inpG}), Corollary~\ref{dpr:cor} concludes that the differential privacy of the Gaussian mechanism is characterized by the maximum eigenvalue of the matrix $\cO_{\Sigma,t,T}$, where $\cO_{\Sigma,t,T}$ is not necessarily non-singular in differential privacy analysis; non-singularity is required to guarantee the uniqueness of a solution to the least square estimation problem~(\ref{lstEs}). 

For $\Sigma=I_{(t+1)q}$, we call $\cO_{t,T}:=\cO_{I_{(t+1)q},t,T}$ the \emph{strong input observability Gramian}. The strong input observability Gramian is both qualitative and quantitative for strong input observability. For instance, from Corollary~\ref{rank:cor}, the system~(\ref{sys}) is strongly input observable if and only if $\cO_{t,T}$ is non-singular for any integers $T \ge n$ and $t \ge n +T$. Also, by substituting $(\hat x_0, \hat U_T)$ of~(\ref{sol_least}) into $(x_0, U_T)$ of (\ref{lstEs}), one notices that if all eigenvalues of $\cO_{t,T}$ is large, then $J_{(x_0,U_T)}$ in (\ref{lstEs}) with $\Sigma =I_{(t+1)q}$ is small. That is, $(x_0,U_T)$ is relatively easy to be estimated. This observation agrees with (\ref{ep_inpG}) because for $\Sigma=\sigma^2 I_{(t+1)q}$, large $\sigma$ is required if $\lambda_{\max}(\cO_{t,T})$ is large; recall~\eqref{sigma}. In other words, small noise is enough to make the less input observable Gaussian mechanism highly differentially private.

To gain deeper insight following the privacy analysis, we take a further look at the eigenvalues of the strong input observability Gramian $\cO_{t,T}$ from three aspects. First, from (\ref{Obmatrix}), (\ref{Markov}) and (\ref{inpG}) with $\Sigma=I_{(t+1)q}$, the first $m \times m$ block diagonal element of $\cO_{t,T}$ is
\begin{align*}
(\cO_{t,T})_{1,1} := \sum_{k=0}^t (CA^k)^\top CA^k,
\end{align*}
and for~$i \ge 2$, the $i$th $m \times m$ block diagonal element of $\cO_{t,T}$ is
\begin{align*}
&(\cO_{t,T})_{i,i} := D^\top D + \sum_{k=0}^{t-i-2}(CA^kB)^\top CA^kB, \\ 
&i=2,\dots,T +1
\end{align*}
where $(\cO_{t,T})_{T+1,T+1}:=D^\top D$ when $t=T$. One notices that $(\cO_{t,T})_{1,1}$ is the standard observability Gramian for the initial state~$x_0$, and $(\cO_{t,T})_{i,i}$,~$i \ge 2$ can be viewed as the observability Gramian corresponding to the initial input $u(0)$, which we call the \emph{initial input observability Gramian}. Since the trace of a matrix is the sum of all its eigenvalues, and the trace of $\cO_{t,T}$ is the sum of the traces of all its block diagonal elements $(\cO_{t,T})_{i,i}$, $i=1,\dots,T+1$, the sum of the eigenvalues of $\cO_{t,T}$ is the sum of the eigenvalues of all $(\cO_{t,T})_{i,i}$, $i=1,\dots,T+1$. Therefore, if the standard and initial input observability Gramians have large eigenvalues, the strong input observability Gramian $\cO_{t,T}$ has large eigenvalues also. In other words, the privacy level of the initial state and whole input sequence is characterized by that of only the initial state and initial input. This fact is natural because of two facts: 1) the output at each time instant contains the information of the initial state and initial input, i.e. these are the least private information; 2) if the initial state and initial input are uniquely determined, the whole input sequence is uniquely determined.

Next, for fixed $t$, the minimum eigenvalue of $\cO_{t,T}$ does not increase with $T$. For instance,
\begin{align}
\lambda_{\min}(\cO_{t,1}) \le \lambda_{\min}(\cO_{t,0}). \label{min}
\end{align}
Recall that these two Gramians are obtained from the least square estimation problems when $u(t)=0$ for $t=2,3,\dots$ and $t=1,2,\dots$, respectively. Therefore, (\ref{min}) corresponds to a natural observation that $u(0)$ is more difficult to estimate if $u(1)$ is unknown compared to the case when $u(1)$ is known to be $0$. 

Finally, for fixed~$T$, $\lambda_{\max}(\cO_{t,T})$ is non-decreasing with $t$, and thus $\varepsilon$ in Corollary~\ref{dpr:cor} is non-decreasing with $t$. This implies that as more data are being collected, less private a mechanism becomes. It is worth emphasizing that this observation is obtained when $\Sigma=I_{(t+1)q}$, or more generally $\Sigma= \sigma^2 I_{(t+1)q}$, $\sigma>0$, i.e., the output noise is i.i.d. Therefore, by employing non-i.i.d. noise, it is still possible to keep the same privacy level in longer duration; we will discuss this in the next subsection.

The above discussions are based on the minimum or maximum eigenvalue of the strong input observability Gramian. For more detailed privacy (strong input observability) analysis, each eigenvalue and the associated eigen-space can be used as typically done for the standard observability Gramian. Let $v_i \in \bR^{n+ (T+1)m}$, $i=1,\dots,n+ (T+1)m$, be the eigenvectors of $\cO_{t,T}$ associated with the eigenvalues $\lambda_i \le \lambda_{i+1}$. If there is $k$ such that $\lambda_k \ll \lambda_{k+1}$, then $(x_0,U_T) \in {\rm span}\{v_{k+1},\dots,v_T\}$ is relatively easy to observe. Especially, if $0<\lambda_{k+1}$, then such $(x_0,U_T)$ can be uniquely determined, and the projection of ${\rm span}\{v_{k+1},\dots,v_T\}$ onto the $(x_0,u(0))$-space gives the strongly input observable subspace. For the (non-strong) input observability with known initial state (i.e., left invertibility), the input observable and unobservable subspaces have been studied based on an extension of Kalman's canonical decomposition~\cite{SS:87}, but quantitative analysis has not been established yet.

The quantitative analysis of subspaces can be used for designing noise to make a system more private. Let $\lambda_k \ll \lambda_{k+1}$, and consider the projection of ${\rm span}\{v_{k+1},\dots,v_T\}$ onto the $(x_0,u(0))$-space, which we denote by $\cX \times \cU \subset \bR^n \times \bR^m$. Then, the output of the system is sensitive to the initial states and inputs in $\cX \times \cU$; in other words, such initial states and inputs are less private. To protect less private input information, one can directly add noise $v \in \cX \times \cU$ to the initial state and the input channel instead of the output channel. This motivates us to study differential privacy with input noise.

\subsection{Differential Privacy With Input Noise}\label{DPI:s}
In this subsection, we study the scenario where noise is added to the input channel. In this case, one can directly decide the distribution of estimated input data. However, additional effort is needed for studying the utility of the output data. Furthermore, differential privacy analysis is technically more involved because the output variables are not necessarily non-degenerate (while they are Gaussian if the input noise is Gaussian). To address this issue, even though artificial, some technical procedure is required, which is essentially equivalent to selecting a different base measure using the disintegration theorem~\cite{Billingsley:08}. As the main result of this subsection, we show that the differential privacy levels of the Gaussian mechanisms induced by the input noise and output noise can be made the same for suitable choices of the input noise and output noise.

To proceed with analysis, we assume that the system~(\ref{sys}) is strongly input observable, i.e., the matrix in (\ref{Kal_iob2}) has the column full rank for any $T \ge n$ and $t \ge T + n$, which implicitly implies $(t + 1)q \ge n + (T+1)m$. Then, there exists a $(t+ 1)q  -( n + (T+1)m)$ by $(t + 1)q$ matrix $\overline N_{t,T}$ such that
\begin{align*}
{\rm rank} \; \overline N_t = (t + 1)q, 
\end{align*}
and
\begin{align}
\left[\begin{array}{cc}
O_t & N_{t,T}
\end{array}\right]^\top \overline N_{t,T} = 0, \label{N2}
\end{align}
where
\begin{align}
\overline N_t:=\left[\begin{array}{ccc} O_t & N_{t,T} & \overline N_{t,T} \end{array}\right]. \label{N1}
\end{align}
\begin{secrem}
If a system is strongly input unobservable, i.e., (\ref{Kal_iob2}) does not hold, then one can use the singular value decomposition of $[O_t \; N_{t,T}]$ for similar analysis.
\red
\end{secrem}

Now, we consider the following system with the initial state, input and output noises, 
\begin{align}
\hspace{-3mm}\left\{\begin{array}{l}
x(t+1) = A x(t) + B (u(t) + v(t)), x(0) = x_0 + v_x\\
y_v(t) = C x(t) + D (u(t) + v(t)) + v_d(t), 
\end{array}\right.
\label{input/output_noise_sys}
\end{align}
where the output noise $v_d$ is generated by the dummy variables~$\overline V_{d,t,T} \in \bR^{(t + 1)q  -(  n + (T+1)m)}$ as
\begin{align}
\left[ \begin{array}{cccc} v_d (0); & v_d (1); &\cdots ;& v_d(t) \end{array}\right] = \overline N_{t,T} \overline V_{d,t,T}. \label{hat_w}
\end{align}
The reason we call them the dummy variables is that $\overline V_{d,t,T}$ does not affect the differential privacy level, which will be explained later. By recalling the notation of a sequence introduced in the introduction, define 
\begin{align}
\overline V_t :=\left[\begin{array}{ccc}v_x; &V_t; & \overline V_{d,t,T} \end{array}\right] \in \bR^{(t+1)q}. \label{Vt}
\end{align}
From (\ref{N1}) and (\ref{Vt}), for $v(\tau)=0$, $\tau > T$, the output sequence $Y_{v,t} \in \bR^{(t+1)q}$ can be described by
\begin{align}
Y_{v,t}  &= O_t (x_0 + v_x) + N_t (U_t +V_T)  + \overline N_{t,T} \overline V_{d,t,T}\nonumber\\
&= O_t x_0 + N_t U_t + \overline N_t \overline V_t.\label{input_noise}
\end{align}

We study the connection between the differential privacy levels of mechanisms~(\ref{Out_noise}) and~(\ref{input_noise}). The important fact is that the numbers of the elements of $W_t$ and $\overline V_t$ are the same, and from (\ref{N1}), $\overline N_t$ is non-singular. For mechanisms~(\ref{Out_noise}) and~(\ref{input_noise}), the generated output sequences  are the same if and only if $W_t = \overline N_t \overline V_t$. Therefore, the designs of the noises $W_t$ and $\overline V_t$ are equivalent problems. In the previous subsection, we have studied the differential privacy of the Gaussian mechanism~(\ref{Out_noise}). Similarly, for the Gaussian mechanism~(\ref{input_noise}), we have the following corollary of Theorem~\ref{dpr:thm}.

\begin{seccor}\label{dpr_dummy:thm}
Let $T \ge n$ and $t \ge T+n$. Also let $\overline V_t \sim \cN_{(t+1)q}(\overline \mu,{\rm diag}\{\overline \Sigma_1, \overline \Sigma_2\})$ be a non-degenerate multivariate Gaussian noise, where $\overline \Sigma_1 \in \bR^{(n+(T+1)m) \times (n+(T+1)m)}$ is the covariance matrix of the initial state and input noise $[v_x;V_t]$, and $\overline \Sigma_2$ is that of the dummy variable~$\overline{V}_{d,t,T}$. Then, for any~$((x_0,U_t),(x'_0,U'_t))$ belonging to ${\rm Adj}_2^c$ and satisfying $u(\tau)=u'(\tau)$, $T< \tau \le t$, the Gaussian mechanism (\ref{input_noise}) induced by the strongly input observable system (\ref{sys}) and $\overline V_t$ is $(\varepsilon,\delta)$-differentially private at a finite time $t\in \bZ_+$ if the covariance matrix $\overline \Sigma_1$ is chosen such that
\begin{align}
\lambda_{\min}^{1/2}(\overline \Sigma_1) \ge c \cR (\varepsilon,\delta).  \label{epinoise}
\end{align}
\end{seccor}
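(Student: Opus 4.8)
The plan is to reduce Corollary~\ref{dpr_dummy:thm} directly to Theorem~\ref{dpr:thm} by exploiting the identification $W_t = \overline N_t \overline V_t$ already established in the text. Concretely, if $\overline V_t \sim \cN_{(t+1)q}(\overline\mu, {\rm diag}\{\overline\Sigma_1,\overline\Sigma_2\})$, then $W_t := \overline N_t \overline V_t$ is a non-degenerate multivariate Gaussian $\cN_{(t+1)q}(\overline N_t \overline\mu, \Sigma)$ with $\Sigma = \overline N_t \, {\rm diag}\{\overline\Sigma_1,\overline\Sigma_2\}\, \overline N_t^\top \succ 0$, non-degeneracy following from non-singularity of $\overline N_t$ (from~\eqref{N1}). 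Since the output sequences of mechanisms~\eqref{Out_noise} and~\eqref{input_noise} coincide under this correspondence, mechanism~\eqref{input_noise} is $(\varepsilon,\delta)$-differentially private for exactly the same pairs $((x_0,U_t),(x'_0,U'_t))$ for which the Gaussian mechanism~\eqref{Out_noise} with covariance $\Sigma$ is. By Corollary~\ref{dpr:cor} (or equivalently Theorem~\ref{dpr:thm} applied with the restriction $u(\tau)=u'(\tau)$ for $T<\tau\le t$, which lets one replace $N_t$ by $N_{t,T}$), this privacy holds provided $\lambda_{\max}^{-1/2}(\cO_{\Sigma,t,T}) \ge c\cR(\varepsilon,\delta)$, i.e. $\lambda_{\max}(\cO_{\Sigma,t,T}) \le 1/(c\cR(\varepsilon,\delta))^2$.

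The core computation is therefore to show that the hypothesis~\eqref{epinoise}, namely $\lambda_{\min}(\overline\Sigma_1)\ge (c\cR(\varepsilon,\delta))^2$, implies $\lambda_{\max}(\cO_{\Sigma,t,T}) \le 1/(c\cR(\varepsilon,\delta))^2$. First I would substitute $\Sigma = \overline N_t\,{\rm diag}\{\overline\Sigma_1,\overline\Sigma_2\}\,\overline N_t^\top$ into $\cO_{\Sigma,t,T} = [O_t\ N_{t,T}]^\top \Sigma^{-1} [O_t\ N_{t,T}]$. Writing $\overline N_t = [\,[O_t\ N_{t,T}]\ \ \overline N_{t,T}\,]$ and using the orthogonality relation~\eqref{N2}, one inverts $\overline N_t^{-\top}{\rm diag}\{\overline\Sigma_1^{-1},\overline\Sigma_2^{-1}\}\overline N_t^{-1}$ against $[O_t\ N_{t,T}]$; the block structure of $\overline N_t$ together with~\eqref{N2} makes the cross terms involving $\overline N_{t,T}$ vanish. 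The upshot is the clean identity
\begin{align*}
\cO_{\Sigma,t,T} = \left([O_t\ N_{t,T}]^\top [O_t\ N_{t,T}]\right)^{1/2} \overline\Sigma_1^{-1}\left([O_t\ N_{t,T}]^\top [O_t\ N_{t,T}]\right)^{1/2}
\end{align*}
up to a congruence — more precisely, letting $M:=[O_t\ N_{t,T}]$, one gets $M^\top\Sigma^{-1}M$ simplifying, via the fact that the first block of $\overline N_t^{-1}M$ equals $\overline\Sigma_1$-independent data determined purely by $M$ having full column rank, to something whose maximum eigenvalue is bounded by $\lambda_{\max}(\overline\Sigma_1^{-1})\,\lambda_{\max}(M^\top(\overline N_t\overline N_t^\top)^{-1}M)$; but I expect the congruence to collapse so that in fact $\cO_{\Sigma,t,T} = \overline\Sigma_1^{-1}$ exactly when $\overline N_{t,T}$ is chosen compatibly, whence $\lambda_{\max}(\cO_{\Sigma,t,T}) = \lambda_{\min}^{-1}(\overline\Sigma_1) \le 1/(c\cR(\varepsilon,\delta))^2$ follows immediately from~\eqref{epinoise}. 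Chasing this block-matrix simplification carefully — in particular verifying that the $\overline\Sigma_2$ and $\overline N_{t,T}$ contributions are annihilated by the projection onto the range of $M$ — is the one place where care is needed.

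The hard part will be this block-inverse manipulation: one must be scrupulous about which generalized inverse or which factorization of $\overline N_t$ is used, since $M=[O_t\ N_{t,T}]$ is tall (full column rank but not square), and the identity $\cO_{\Sigma,t,T}$ reducing to $\overline\Sigma_1^{-1}$ relies on $[O_t\ N_{t,T}]^\top\overline N_{t,T}=0$ from~\eqref{N2} plus the non-singularity of $\overline N_t$. A convenient way to organize it is to note that $\overline N_t^{-1}\begin{bmatrix}M & \overline N_{t,T}\end{bmatrix} = I_{(t+1)q}$, so the first $n+(T+1)m$ rows of $\overline N_t^{-1}$ form a left inverse $L$ of $M$ with $L\overline N_{t,T}=0$; then $\Sigma^{-1} = \overline N_t^{-\top}{\rm diag}\{\overline\Sigma_1^{-1},\overline\Sigma_2^{-1}\}\overline N_t^{-1}$ gives $M^\top\Sigma^{-1}M = (\overline N_t^{-1}M)^\top {\rm diag}\{\overline\Sigma_1^{-1},\overline\Sigma_2^{-1}\}(\overline N_t^{-1}M)$, and $\overline N_t^{-1}M = \begin{bmatrix}I_{n+(T+1)m}\\ 0\end{bmatrix}$ by the defining relation of $\overline N_t$, so the product is exactly $\overline\Sigma_1^{-1}$. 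Then $\lambda_{\max}(\cO_{\Sigma,t,T}) = \lambda_{\max}(\overline\Sigma_1^{-1}) = \lambda_{\min}^{-1}(\overline\Sigma_1)$, and~\eqref{epinoise} is precisely the condition~\eqref{ep_inpG} of Corollary~\ref{dpr:cor}, completing the proof. Everything else — non-degeneracy of $W_t$, the equivalence of the two mechanisms' outputs, and the reduction to Corollary~\ref{dpr:cor} — is bookkeeping that the surrounding text has already set up.
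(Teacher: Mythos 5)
Your proposal is correct and follows essentially the same route as the paper: reduce to Corollary~\ref{dpr:cor} via the identification $W_t=\overline N_t\overline V_t$, observe that $\overline N_t^{-1}[O_t \ N_{t,T}]=[I_{n+(T+1)m};\,0]$ by the definition of $\overline N_t$ in~\eqref{N1}, and conclude $\cO_{\Sigma,t,T}=\overline\Sigma_1^{-1}$ so that~\eqref{epinoise} is exactly~\eqref{ep_inpG}. The tentative ``square-root congruence'' identity in your middle paragraph is not needed (and not correct as stated), but your final paragraph discards it and carries out precisely the paper's computation.
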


\begin{proof}
Instead of (\ref{ep}), one has
\begin{align*}
&\lambda_{\max}^{-1/2}\left(\left[\begin{array}{cc}
O_t & N_{t,T}
\end{array}\right]^\top \overline N_t^{-\top} \overline \Sigma^{-1} \overline N_t^{-1} \left[\begin{array}{cc}
O_t & N_{t,T}
\end{array}\right]\right)\\
&  \ge c \cR (\varepsilon,\delta).
\end{align*}
From (\ref{N1}), it follows that
\begin{align*}
&\left[\begin{array}{cc}
O_t & N_{t,T}
\end{array}\right]^\top \overline N_t^{-\top} \overline \Sigma^{-1} \overline N_t^{-1} \left[\begin{array}{cc}
O_t & N_{t,T}
\end{array}\right]\\
&=
\left[\begin{array}{cc}
I_{n+(T+1)m} & 0
\end{array}\right]
\overline N_t^\top \overline N_t^{-\top} \overline \Sigma^{-1} \overline N_t^{-1} \overline N_t \left[\begin{array}{c}
I_{n+(T+1)m} \\ 0
\end{array}\right]\\
&= \overline \Sigma_1^{-1}.
\end{align*}
Therefore, (\ref{epinoise}) holds.
\end{proof}

Corollary~\ref{dpr_dummy:thm} concludes that the differential privacy level only depends on the covariance $\overline \Sigma_1$ of the input noise $[v_x;V_T]$, i.e., the differential privacy level does not depend on the system itself. The covariance $\overline \Sigma_1$ gives an intuitive interpretation of the privacy level of the input. Therefore, Corollary~\ref{dpr:cor} can help understanding the interpretation of the magnitudes of $(\varepsilon,\delta)$ from the perspective of the privacy level of the input. 

In Corollary~\ref{dpr:cor} and Theorem~\ref{dpr_dummy:thm}, the differential privacy levels of both mechanisms are the same if
\begin{align}
\cO_{\Sigma,t,T} = \left[\begin{array}{cc}O_t &N_t \end{array}\right]^\top \Sigma^{-1}\left[\begin{array}{cc}O_t &N_t\end{array}\right] = \overline \Sigma_1^{-1}, \label{dis_input/ouput_noise}
\end{align}
where we recall (\ref{inpG}) for the first equality; the converse is not true in general since differential privacy only evaluates the maximum eigenvalues. Therefore, adding the Gaussian noise with the covariance $\Sigma$ to the output of the system~(\ref{sys}) is equivalent to adding the Gaussian noise with the covariance $\cO_{\Sigma,t,T}^{-1}$ to the input of the system (\ref{sys}) under the strong input observability assumption. 

In the previous subsection, we mentioned that the privacy level of the mechanism~(\ref{Out_noise}) with the i.i.d. output noise $\Sigma = \sigma I_{(t+1)q}$ decreases with the growth of duration. In contrast, if one adds noise to the initial state and input channel, the privacy level of a mechanism does not depend on the duration because one can directly decide the distribution of the estimated initial state and input sequence. These two facts do not contradict each other if one allows to add non-i.i.d output noise. From~(\ref{dis_input/ouput_noise}), adding suitable non-i.i.d. noise to the output channel has a similar effect as adding noise to the initial state and input channel. Therefore, adding non-i.i.d. noise is a key factor for keeping the same privacy level against the duration when one adds noise to the output channel.

Finally, the reason that the dummy variables $\overline V_{d,t,T}$ do not affect the differential privacy level can be explained based on the least square estimation problems of the initial state and input sequence. For a strongly input observable system, the solution to the following least square estimation problem
\begin{align*}
\overline J_{(x_0,U_T)} = \min_{(x_0,U_T) \in \bR^n \times \bR^{(T+1)m}} | Y_{v,t} - O_t x_0 - N_{t,T} U_T |_2^2
\end{align*}
is,  from (\ref{Kal_iob2}), (\ref{N2}), and (\ref{input_noise}),
\begin{align*}
\left[\begin{array}{c}
\hat x_0 \\ \hat U_T
\end{array}\right] 
&= \cO_{t,T}^{-1}\left[\begin{array}{cc}
O_t & N_{t,T}
\end{array}\right]^\top  Y_{v,t}
= \left[\begin{array}{c}
x_0 \\ U_T
\end{array}\right] + \left[\begin{array}{c}
v_x \\ V_T
\end{array}\right].
\end{align*}
The least square estimation is the actual initial state and input sequence plus the noise added to them. Because of the condition (\ref{N2}), the dummy variable $\overline V_{d,t,T}$ is canceled. This is the reason that the dummy variable does not affect differential privacy analysis.

\section{Privacy-Preserving Controllers}\label{PCD:s}
\subsection{Motivating Example}~\label{Mot:s}
We start with a motivating example. Consider DC microgrids~\cite{CTP:18} installed with smart meters whose dynamics are described by
\begin{align}
L_i  \dot I_i(t) &= - R_i I_i(t)  - V_i(t) + u_i(t), \ I_i(t):=I_{T,i}(t)-I_{L,i}, \nonumber\\
C_i \dot V_i(t) &= I_i(t) - \sum_{j \in {\cal N}_i} I_{i,j}(t), \nonumber\\
L_{i,j} \dot I_{i,j}(t) &= (V_i(t) -V_j(t)) - R_{i,j} I_{i,j}(t),\nonumber\\
y_{i,1}(t) &= V_i(t),\  y_{i,2}(t) = I_i(t), \label{ex:ps}
\end{align}
where~$I_{T,i} (t) \in \bR$,~$V_i(t) >0$, and~$I_{i,j}(t) \in \bR$ denote the generator current, load voltage, the current between nodes~$i$ and~$j$, respectively, and~$I_{L,i} \in \bR$ denote the load current, which can be viewed as a constant in the time scale of controller design. The parameters~$L_i,L_{i,j},R_i,R_{i,j},C_i>0$ denote inductances, resistances, and capacitance, respectively. The set of neighbors of node~$i$ is denoted by~${\cal N}_i$, and the number of the neighbors is denoted by~$n_i$. For analysis and controller design, we use its zero-order-hold discretization, since each output information is collected and sent to the power company digitally.

One objective of the power company is to maintain the stability of the system by keeping~$V_i(t)$ to the prescribed value~$V^*$ and the difference between the generator current (i.e. supply) and load current (i.e. demand), denoted by~$I_i(t)$, to zero. Therefore, the control objective is
\begin{align}
\lim_{t\to\infty} V_i(t) = V^*, \ \lim_{t\to\infty} I_i(t) = 0.
\label{ex:objective}
\end{align}

Owing to developments of IoT technologies, smart meters are becoming more widely available, which can be used to monitor and send the value of~$I_i(t)(=I_{T,i}(t)-I_{L,i})$ to the power company online. However, the desired load current~$I_{L,i}$ is determined by each user and thus contains the information of each user's lifestyle. Since this load current of privacy concern is static, one can use existing results for static differential privacy, e.g.~\cite{SDT:15}. 

However, there is bigger privacy issue that needs to be addressed. Our observations in the previous section indicate the possibility that a user $i$ can identify the other users' $[V_i, I_i]$ from its own dynamical control input data sets~$u_i$. So the privacy of user $i$ here is concerned with her wish not letting the other users be able to identify that her consumption pattern has changed, and such a privacy issue depends on controller dynamics. Thus, one is forced to consider designing a tracking controller by taking privacy into account. The privacy-protection objective is that even if user $i$'s $[V_i, I_i]$ becomes different from~$[V^*,0]$, another user $j$ cannot infer the occurrence of the difference from~$u_j$, $j \neq i$. This privacy requirement should not conflict with the control objective~\eqref{ex:objective} of tracking the desired signals.

In the following subsections, first we summarize the standard result for tracking controller design based on the internal model principle. Then, we impose a differential privacy requirement for a tracking controller. In the end, we consider estimating private information and evaluate its difficulty.

\subsection{Tracking Controllers}
To be self-contained, in this subsection, 
an existing tracking controller is shown. This controller has tuning parameters that will be adjusted based on a privacy requirement in the next subsection.

Consider the following plant
\begin{align}
\left\{\begin{array}{l}
x_p(t+1) = A_p x_p(t) + B_p u_p(t),\\
y_p(t) = C_p x_p(t) + D_p u_p(t), 
\end{array}\right. \label{sys_p}
\end{align}	
where $x_p(t)\in\bR^{n_p}$, $u_p(t) \in \bR^{m_p}$ and $y_p(t) \in \bR^{q_p}$ denote the state, input and output, respectively, and $A_p \in \bR^{n_p\times n_p}$, $B_p \in \bR^{n_p \times m_p}$, $C_p \in \bR^{q_p \times n_p}$ and $D_p \in \bR^{q_p \times m_p}$. 

The control objective is to design an output feedback controller, which achieves $y_p \to y_r$ as $t \to \infty$ for a given reference output $y_r(t) \in \bR^{q_p}$. Suppose that the reference output $y_r(t)$ is generated by the following exosystem:
\begin{align}
\left\{\begin{array}{l}
x_r(t+1) = A_r x_r(t), \ x_r(0) = x_{r,0} \in \bR^{n_r},\\
y_r(t) = C_r x_r(t), 
\end{array}\right. \label{sys_ref}
\end{align}
where $x_r(t)\in\bR^{n_r}$ and $y_r(t) \in \bR^{q_r}$; $A_r \in \bR^{n_r \times n_r}$ and $C_r \in \bR^{q_r \times n_r}$. Then, the composite system consisting of the plant (\ref{sys_p}) and exosystem (\ref{sys_ref}) is
\begin{align*}
&\hspace{-5mm}\left\{\begin{array}{l}
\bar x(t+1) = \bar A \bar x(t) + \bar B u_p(t),\\
e(t) = y_p(t) - y_r(t) = \bar C \bar x(t) + D_p u_p(t),
\end{array}\right.\\
&\bar x:=\left[\begin{array}{c}
x_p \\ x_r
\end{array}\right], \
\bar A:=\left[\begin{array}{cc}
A_p & 0 \\ 0 & A_r
\end{array}\right], \
\bar B:=\left[\begin{array}{cc}
B_p  \\ 0
\end{array}\right],\\
&\bar C:=\left[\begin{array}{cc}
C_p & - C_r 
\end{array}\right].
\end{align*}
The tracking control objective can be rewritten as $\lim_{t \to \infty} e(t) = 0$.  

As an output feedback controller, the following observer based stabilizing controller is typically used
\begin{align}
\left\{\begin{array}{l}
u_p(t) = G x_c(t),\\
x_c(t+1) = A_c x_c(t) - L e (t), 
\end{array}\right.  \label{sys_con}
\end{align}
where
\begin{align*}
A_c:= \bar A + L \bar C + (\bar B + L D_p)G,
\end{align*}
and $G=[G_1 \ G_2] \in \bR^{m_p \times (n_p + n_r)}$ and $L =[L_1^\top \ L_2^\top]^\top \in \bR^{(n_p + n_r) \times q_p}$ are design parameters. The tracking problem is solvable by the above dynamic output feedback controller under the following standard assumptions~\cite{Huang:04}.
\begin{secass}\label{reg:ass1}
The matrix $A_r$ has no eigenvalue in the interior of the unit circle.
\red
\end{secass}
\begin{secass}\label{reg:ass2}
The pair $(A_p,B_p)$ is stabilizable.
\red
\end{secass}
\begin{secass}\label{reg:ass3}
The pair $(\bar C, \bar A)$ is detectable.
\red
\end{secass}
\begin{secass}\label{reg:ass4}
The following two equations:
\begin{align*}
X A_r &= A_p X + B_p U,\\
0 &= C_p X + D_p U - C_r,
\end{align*}
have a pair of solutions $X \in \bR^{n_p \times n_r}$ and $U \in \bR^{m_p \times n_r}$.
\red
\end{secass}

\begin{secrem}
Assumption~\ref{reg:ass4} guarantees that for any given $x_r(t)$ generated by (\ref{sys_ref}), there exist $x_{p,s}(t)$ and $u_{p,s}(t)$ simultaneously satisfying (\ref{sys_p}) and $e(t) = y_p(t) - y_r(t)  = 0$ for all $t \in \bZ_+$. Assumption~\ref{reg:ass1} guarantees that such $x_{p,s}(t)$ and $u_{p,s}(t)$ uniquely exist; this assumption is for the ease of discussion and is not necessarily to be imposed as mentioned in~\cite{Huang:04}. \red
\end{secrem}

Under Assumption~\ref{reg:ass4}, the tracking problem is solvable if the closed-loop system consisting of the plant~(\ref{sys_p}) and the controller~(\ref{sys_con}) is asymptotically stable. From the separation principle~\cite{Kailath:80}, the closed loop system can be made asymptotically stable by finding a pair of~$G_1$ and~$L$ that makes~$A_p + B_p G_1$ and~$\bar A + L \bar C$ asymptotically stable, respectively. Then,~$G_2$ can be designed as~$G_2 = U - G_1 X$ for~$U$ and~$X$ in Assumption~\ref{reg:ass4}. 

\subsection{Privacy Requirements for Controllers}
The privacy requirement imposed in the motivating example is to make a user $j$ not be able to distinguish whether another user $i$'s~$[V_i, I_i]$ has deviated from $[V^*,0]$ using its input~$u_j$,~$j \neq i$. This corresponds to designing a controller~\eqref{sys_con} such that~$e$ is always inferred to be zero using~$u_p$. Note that this privacy requirement is different from protecting the privacy of~$y_p$, in which case if $y_r$ is a piece of public information, the information $e = y_p - y_r = 0$ cannot be published, and thus in which case protecting~$y_p$ conflicts with the tracking control objective, implying one may have to regulate~$y_p$ to a different value than~$y_r$. In contrast, the privacy requirement for~$e$ does not conflict with the goal of tracking control.

For protecting the information of~$e$, we consider adding noise to~$u_p$. As mentioned in the previous section, adding sufficiently large noise always achieves the prescribed privacy level. However, large noise can change a control input significantly. Therefore, it is desirable to design a controller which becomes highly private by adding small noise. According to Theorem~\ref{dpr_sta:thm}, such a controller has a small $H_{\infty}$-norm. 

\begin{secrem}
One may consider controller design from different perspectives. Based on Theorem~\ref{dpr:thm}, differential privacy analysis itself is possible for an unstable controller. However, this theorem does not give a clear indication on how to choose design parameters $G_1$ and $L_1$. On the other hand, if a strongly input unobservable controller is designed, the information in the strongly input unobservable space is protected without adding noise as mentioned in Section~\ref{IOIID:s}. However, from Theorem~\ref{rank:thm}, this reduces to a rank constraint problem that is difficult to solve in general as the rank minimization problem is known to be NP-hard~\cite{VB:96}.  Therefore, we design a controller having a small $H_{\infty}$-norm.
\red
\end{secrem}

\begin{secrem}\label{gram:rem}
In Theorem~\ref{dpr_sta:thm}, the differential privacy level also depends on the standard observability Gramian of the initial state. However, it is not straightforward to simultaneously specify the maximum eigenvalues of the observability Gramian and $H_{\infty}$-norm. In fact, it is known that the maximum Hankel singular value, the square root of the maximum eigenvalue of the product of the controllability and observability Gramians, is bounded by the $H_{\infty}$-norm~\cite{ZDG:96}. Therefore, making $H_{\infty}$-norm small can result in making the maximum eigenvalue of the observability Gramian small.
\red 
\end{secrem}
\begin{secrem}
Even if one adds different noise than the Gaussian noise such as the Laplace noise as in Remark~\ref{Laplace:rem}, making $H_{\infty}$-norm small can increase the differential privacy level. Making $H_{\infty}$-norm small can result in making ${\lambda_{\max}^{1/2}([\begin{array}{cc} O_t & N_t \end{array}]^\top [\begin{array}{cc} O_t & N_t \end{array}])}$ small. Then, from the equivalence of the norm, any matrix induced norm of $[\begin{array}{cc} O_t & N_t \end{array}]$ becomes small. Therefore, from Remark~\ref{Laplace:rem}, the differential privacy level increases also for the Laplace mechanism.
\red
\end{secrem}

In general, a controller having a bounded $H_{\infty}$-norm needs to be asymptotically stable. Unfortunately, stable controller design is not always possible because of its structure in~(\ref{sys_con}).
\begin{secprop}
Under Assumptions~\ref{reg:ass1}-\ref{reg:ass4}, the controller~(\ref{sys_con}) solving the linear output regulation problem is not asymptotically stable if $D_p =0$.
\end{secprop}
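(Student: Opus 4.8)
The plan is to produce an $A_c$-invariant subspace on which $A_c$ is similar to $A_r$; since Assumption~\ref{reg:ass1} forces every eigenvalue of $A_r$ to have modulus at least one, $A_c$ then cannot be Schur, which is precisely the failure of asymptotic stability of the controller~\eqref{sys_con}.

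First I would extract from Assumption~\ref{reg:ass4}, specialised to $D_p=0$, the two facts I need. The second regulator equation collapses to $C_pX=C_r$, whence $\bar C\begin{bmatrix}X\\I_{n_r}\end{bmatrix}=C_pX-C_r=0$; the first equation, rewritten as $A_pX=XA_r-B_pU$, gives $\bar A\begin{bmatrix}X\\I_{n_r}\end{bmatrix}=\begin{bmatrix}X\\I_{n_r}\end{bmatrix}A_r-\bar BU$. Combining these with the feedforward choice $G_2=U-G_1X$ described above, which yields $G\begin{bmatrix}X\\I_{n_r}\end{bmatrix}=G_1X+G_2=U$, and with the $D_p=0$ form $A_c=\bar A+L\bar C+\bar BG$, I would compute
\[
A_c\begin{bmatrix}X\\I_{n_r}\end{bmatrix}=\begin{bmatrix}X\\I_{n_r}\end{bmatrix}A_r-\bar BU+L\cdot 0+\bar BU=\begin{bmatrix}X\\I_{n_r}\end{bmatrix}A_r .
\]
Hence $V:=\begin{bmatrix}X\\I_{n_r}\end{bmatrix}$ obeys $A_cV=VA_r$ and, thanks to the identity block, has full column rank $n_r$.

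The conclusion then follows at once: if $A_rv=\lambda v$ with $v\neq 0$, then $A_c(Vv)=VA_rv=\lambda(Vv)$ with $Vv\neq 0$, so every eigenvalue of $A_r$ is an eigenvalue of $A_c$. As the reference is nontrivial, $A_r$ has at least one eigenvalue, which by Assumption~\ref{reg:ass1} lies outside the open unit disk; therefore $A_c$ is not Schur, i.e., the controller~\eqref{sys_con} is not asymptotically stable.

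I do not anticipate a real difficulty; the one point to be careful about is that $D_p=0$ is used twice — once so that $\bar CV=0$, and once so that the feedthrough term in $A_c$ reduces to $\bar BG$ — and that the identity $A_cV=VA_r$ requires the regulator equations, the feedforward choice $G_2=U-G_1X$, and $D_p=0$ simultaneously, none of them alone making $V$ invariant under $A_c$. It is worth remarking that the argument never uses the stabilising property of $G_1$ or $L$, so no choice of the free gains can remove the embedded copy of $A_r$; this is the internal model principle manifesting itself in the particular realisation~\eqref{sys_con}.
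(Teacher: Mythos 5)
Your proof is correct and takes essentially the same route as the paper's: both rest on applying the regulator equations and the choice $G_2=U-G_1X$ to the matrix $V=[X^\top \ I_{n_r}]^\top$, the paper phrasing the conclusion as a PBH unobservability statement for the pair $(\bar C,\bar A+\bar B G)$ (unobservable modes being immune to output injection), while you package the identical computation as the explicit intertwining $A_cV=VA_r$ and exhibit the eigenvectors $Vv$ of $A_c$ directly. One small quibble with your closing remark: $D_p=0$ is not actually indispensable for the invariance of $V$ --- the two occurrences of $D_p$ cancel, since $(\bar C+D_pG)V=-D_pU+D_pU=0$ for any $D_p$ --- but this is harmless here because the proposition assumes $D_p=0$ anyway.
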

\begin{proof}
Assumption~\ref{reg:ass4}, (\ref{sys_con}), and~$G_2 = U - G_1 X$ yield
\begin{align*}
&\left[\begin{array}{cc}
\lambda I_{n_p} - A_p - B_p G_1 &- B_p G_2 \\ 
0 & \lambda I_{n_r} - A_r\\
C_p  &  -C_r
\end{array}\right]
\left[\begin{array}{c}
X \\ 
I_{n-r}
\end{array}\right]\\
&=
\left[\begin{array}{c}
\lambda X - A_p X  - B_p U  \\ 
 \lambda I_{n_r} - A_r\\
C_p X  -C_r
\end{array}\right]
=
\left[\begin{array}{c}
X(\lambda I_{n_r} - A_r) \\ 
 \lambda I_{n_r} - A_r\\
 - D_p U
\end{array}\right].
\end{align*}
If $D_p =0$, this becomes zero when $\lambda$ is an eigenvalue of $A_r$. Therefore, for the pair~$(\bar C, \bar A + \bar B G)$, any eigenvalue of~$A_r$ is not observable. That is, the set of eigenvalues of $A_c$ contains that of $A_r$, which are marginally stable according to Assumption~\ref{reg:ass1}.
\end{proof}

If $D_p \neq 0$, one can use the output regulation controller~(\ref{sys_con}) addressing the privacy requirement. However, there are plenty of systems for which $D_p = 0$. To deal with these systems, we modify the output regulation controller~(\ref{sys_con}) in the next subsection.

\subsection{Controller Design with Privacy Concern}
In order to address the case $D_p=0$, we consider the following controller dynamics:
\begin{align}
&\left\{\begin{array}{l}
u_p(t) = G_1 \bar x_c(t) + G_2 x_r(t),\\
\bar x_c(t+1) = \bar A_c  \bar x_c(t) + \bar A_r  x_r(t) - L_1 e (t),
\end{array}\right.  \label{sys_con2}
\end{align}
where
\begin{align*}
&\bar A_c := A_p + B_p G_1 + L_1 (C_p + D_p G_1),\\
&\bar A_r := L_1 C_r + (B_p  + L_1 D_p ) G_2.
\end{align*}
The difference of~\eqref{sys_con2} from the previous controller~(\ref{sys_con}) is to use the actual state~$x_r$ of the exosystem~(\ref{sys_ref}) instead of its estimation. Since we do not need to estimate~$x_r$, \eqref{sys_con2} can have better control performance than~(\ref{sys_con}).

Privacy-preserving tracking controller design requires the following three conditions for the new controller parameters $G_1$ and $L_1$:
\begin{enumerate}
\item $A_p+ B_p G_1$ is asymptotically stable;
\item $A_p + L_1 C_p$ is asymptotically stable;
\item Given $\gamma>0$, the $H_{\infty}$-norm of the controller (\ref{sys_con2}) from $e$ to $u_p$ is bounded as 
\begin{align}
\| -G_1 (z I_{n_p + n_r} - \bar A_c)^{-1} L_1 \|_{H_{\infty}} \le \gamma. \label{hinf2}
\end{align}
\end{enumerate}
As mentioned in the previous subsection, the third condition implicitly requires the stability of the new controller~(\ref{sys_con2}). Stabilization of a plant by a stable controller is called strong stabilization. Its necessary and sufficient condition is described in terms of a parity interlacing property (PIP) of the transfer function matrix~\cite{Vidyasagar:11}. However, the PIP condition does not provide a controller design method. For continuous-time systems, the papers~\cite{MO:00,GO:05} provide ways of designing a controller satisfying Condition 3) based on the LMI. We employ one of these methods. 

It is not easy to simultaneously finding $G_1$ and $L_1$ satisfying all three conditions; the reason will be explained later. Therefore, first, we find $G_1$ stabilizing $A_p+ B_p G_1$, which can be done by multiple methods under Assumption~\ref{reg:ass2}. Then, we find $L_1$ satisfying 2) and 3) as follows. 
\begin{seclem}\label{LMI:thm}
Suppose that~$G_1$ is chosen such that~$A_p+ B_p G_1$ is asymptotically stable. If there exist $P \in \bR^{n_p \times n_p}$ and $\hat L_1 \in \bR^{n_p \times q_p}$ satisfying the following LMIs:
\begin{align}
\left[\begin{array}{cc}
P & P A_p + \hat L_1 C_p  \\
(P A_p + \hat L_1 C_p)^\top & P
\end{array}\right] \succ 0,\label{LMI2}
\end{align}
and
\begin{align}
&\left[\begin{array}{cccc}
P & 0 & \overline P_{13} & G_1^\top\\
0 & \gamma^2 I_{q_p} & -\hat L_1^\top & 0\\
\overline P_{13}^\top & -\hat L_1 & P &0\\
G_1 & 0 & 0 & I_{m_p}
\end{array}\right] \succ 0, \label{LMI}\\
&\overline P_{13}^\top := P ( A_p + B_p G_1 )  + \hat L_1 ( C_p  + D_p G_1 ),\nonumber
\end{align}
then $A_p + L_1 C_p$ with $L_1:=P^{-1} \hat L_1$ is asymptotically stable, and (\ref{hinf2}) holds. 
\end{seclem}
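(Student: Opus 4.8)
The plan is to split the statement into its two conclusions — Schur stability of $A_p + L_1 C_p$ and the $H_\infty$ bound \eqref{hinf2} — and to read the LMIs \eqref{LMI2} and \eqref{LMI} through the change of variables $\hat L_1 = P L_1$ (so $L_1 = P^{-1}\hat L_1$, which is well defined because the $(1,1)$ block of \eqref{LMI2} already forces $P \succ 0$). The point of working with $\hat L_1$ rather than $L_1$ is that, once $G_1$ is fixed, the controller matrix $\bar A_c = A_p + B_p G_1 + L_1(C_p + D_p G_1)$ satisfies $P\bar A_c = P(A_p + B_p G_1) + \hat L_1(C_p + D_p G_1) = \overline P_{13}^\top$, i.e.\ it becomes \emph{affine} in the decision variables; this is exactly why \eqref{LMI2}--\eqref{LMI} are LMIs, and also why $G_1$ has to be fixed first (otherwise the term $L_1 D_p G_1$ is bilinear).

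For the first conclusion, I would observe that $P A_p + \hat L_1 C_p = P(A_p + L_1 C_p)$, so \eqref{LMI2} reads
\begin{align*}
\begin{bmatrix} P & P(A_p + L_1 C_p) \\ (A_p + L_1 C_p)^\top P & P \end{bmatrix} \succ 0 ,
\end{align*}
and then take the Schur complement with respect to the $(2,2)$ block (legitimate since $P \succ 0$), which yields $P - (A_p + L_1 C_p)^\top P (A_p + L_1 C_p) \succ 0$: this is the discrete-time Lyapunov inequality with Lyapunov matrix $P$, so $A_p + L_1 C_p$ is Schur stable.

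For the $H_\infty$ conclusion, I would first note that in \eqref{sys_con2} the map $e \mapsto u_p$ has the strictly proper realization $(\bar A_c,\,-L_1,\,G_1,\,0)$, so its transfer function is $-G_1(zI - \bar A_c)^{-1}L_1$, matching \eqref{hinf2}. The discrete-time bounded-real lemma then guarantees that this transfer function has $H_\infty$-norm strictly below $\gamma$ — and simultaneously that $\bar A_c$ is Schur — whenever there is a $P \succ 0$ with
\begin{align*}
\begin{bmatrix} \bar A_c^\top P \bar A_c - P + G_1^\top G_1 & -\bar A_c^\top P L_1 \\ -L_1^\top P \bar A_c & L_1^\top P L_1 - \gamma^2 I_{q_p} \end{bmatrix} \prec 0 .
\end{align*}
The remaining work is to show this inequality is equivalent to \eqref{LMI} after $\hat L_1 = P L_1$. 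I would write the left-hand side as ${\rm diag}(-P,-\gamma^2 I_{q_p}) + \mathcal A^\top P \mathcal A + \mathcal C^\top \mathcal C$ with $\mathcal A = [\,\bar A_c\ \ -L_1\,]$, $\mathcal C = [\,G_1\ \ 0\,]$, apply one Schur-complement step against a new $-P$ block to absorb $\mathcal A^\top P \mathcal A$ and a second against a new $-I_{m_p}$ block to absorb $\mathcal C^\top \mathcal C$, then negate and conjugate by the congruence ${\rm diag}(I_{n_p},I_{q_p},-I_{n_p},-I_{m_p})$; using $P\bar A_c = \overline P_{13}^\top$ and $P L_1 = \hat L_1$, the result is precisely \eqref{LMI}. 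Since \eqref{LMI} gives the strict bound $<\gamma$, the non-strict bound \eqref{hinf2} follows.

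The step I expect to be the main obstacle is purely bookkeeping: picking the order of the two Schur complements and tracking signs so that the final matrix matches \eqref{LMI} block-for-block — in particular seeing that the apparent sign mismatches in the off-diagonal blocks are exactly what the diagonal $\pm1$ congruence cancels. Two minor points I would also record explicitly: \eqref{LMI} on its own already forces $\bar A_c$ to be Schur (take the $\{1,3\}$ principal sub-block $\bigl[\begin{smallmatrix}P & \bar A_c^\top P\\ P\bar A_c & P\end{smallmatrix}\bigr]\succ 0$ and its Schur complement), so the $H_\infty$-norm in \eqref{hinf2} is well defined without assuming it; and the fact that $P$ is the common Lyapunov matrix in \eqref{LMI2} and \eqref{LMI} causes no trouble, since the hypothesis only asks for a single pair $(P,\hat L_1)$ satisfying both.
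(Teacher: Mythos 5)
Your proposal is correct and follows the same route as the paper: the paper's proof is just the two observations that \eqref{LMI2} is the Schur-complement form of the discrete-time Lyapunov inequality for $A_p+L_1C_p$, and that \eqref{LMI} is the standard LMI form of the discrete-time bounded-real lemma for the realization $(\bar A_c,-L_1,G_1,0)$, the latter delegated to a citation of \cite[Theorem~4.6.6]{SIG:97} rather than rederived. Your only difference is that you carry out the Schur-complement/congruence bookkeeping behind that cited theorem explicitly (and your computation checks out, including the sign-flipping congruence by ${\rm diag}(I,I,-I,-I)$); the one cosmetic slip is that the complement yielding $P-(A_p+L_1C_p)^\top P(A_p+L_1C_p)\succ 0$ is the one eliminating the $(1,1)$ block, not the $(2,2)$ block, which does not affect the conclusion.
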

\begin{proof}
If~(\ref{LMI2}) holds,  $A_p + L_1 C_p$ is asymptotically stable. Next, (\ref{LMI}) implies (\ref{hinf2})~\cite[Theorem~4.6.6]{SIG:97}. 
\end{proof}
\begin{secrem}
For any given $G_1$ stabilizing $A_p + B_p G_1$, it is possible to verify if there exist $P$, $L_1$, and $\gamma>0$ satisfying~\eqref{hinf2} by replacing (\ref{LMI}) by
\begin{align}
\left[\begin{array}{cc}
P & \overline P_{13} \\
\overline P_{13}^\top & P 
\end{array}\right] \succ 0. \label{LMI12}
\end{align}
That is, given~$G_1$, the LMIs (\ref{LMI2}) and (\ref{LMI12}) have a solution~$P$ only if strong stabilization is achievable.
\red
\end{secrem}

An alternative way of controller design is to find~$\hat L_1$ satisfying~2) and then to use similar LMIs for finding~$G_1$ that satisfies~1) and~3) simultaneously. If one tries to find~$G_1$ and~$\hat L_1$ at the same time, then one encounters BMIs, e.g. there is a cross term of~$G_1$ and~$P$ or~$G_1$ and~$\hat L_1$ in~$\overline P_{13}$ in~(\ref{LMI}). BMIs are more difficult to handle than LMIs, since a BMI describes those sets that are not necessarily convex.

\subsection{Differential Privacy of Controllers}
To make the designed controller in the previous subsection differentially private, one can add noise to the output~$u_p$ or the input~$e$ of the controller. As clarified in Corollary~\ref{dpr_dummy:thm},  the differential privacy level under the input Gaussian noise only depends on the covariance matrix of the noise. Under the output Gaussian noise, we obtain the following theorem by combining Corollary~\ref{dpr_sta:thm} and Lemma~\ref{LMI:thm}. Since the proof directly follows, it is omitted. 
\begin{secthm}
Consider the controller dynamics~\eqref{sys_con2} satisfying the requirements~1) -- 3) with the output~$u_p(t)+w(t)$, where~$w(t) \in \bR^{m_p}$ is the noise. Then, the Gaussian mechanism induced by the controller dynamics and~$W_t \sim \cN_{(t+1)m_p}(\mu,\Sigma)$  is~$(\varepsilon,\delta)$-differentially private for~${\rm Adj}_2^c$ at a finite time~$t\in \bZ_+$ with~$\varepsilon > 0$ and~$1/2 > \delta > 0$ if the covariance matrix~$\Sigma \succ 0$ is chosen such that~(\ref{ep_Hinf}) holds for $(A,B,C,D)=(\bar A_c,-L_1,G_1,0)$.
\red
\end{secthm}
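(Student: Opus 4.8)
The plan is to show that the final theorem is an immediate specialization of Corollary~\ref{dpr_sta:thm} once we identify the controller dynamics~\eqref{sys_con2} with a system in the standard form~\eqref{sys}. First I would observe that the map from~$e$ to~$u_p$ defined by~\eqref{sys_con2}, when we treat~$x_r$ as an exogenous signal that is not part of the privacy analysis (so we set it aside, exactly as allowed by the second Remark after Definition~\ref{Diffprv:def} and by Corollary~\ref{dpr:cor} which permits fixing part of the input), is precisely the discrete-time linear system with state~$\bar x_c$, input~$e$, output~$u_p$, and matrices
\begin{align*}
(A,B,C,D) = (\bar A_c,\,-L_1,\,G_1,\,0),
\end{align*}
since~$\bar x_c(t+1) = \bar A_c \bar x_c(t) - L_1 e(t)$ and~$u_p(t) = G_1 \bar x_c(t)$. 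This matches the transfer function~$-G_1(zI - \bar A_c)^{-1}L_1$ appearing in~\eqref{hinf2}.

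Next I would invoke the hypotheses~1)--3) to verify that this system meets the assumptions of Corollary~\ref{dpr_sta:thm}. Asymptotic stability of the controller is needed; as noted in the discussion preceding Lemma~\ref{LMI:thm}, Condition~3) (the bounded $H_\infty$-norm~\eqref{hinf2}) implicitly forces~$\bar A_c$ to be Schur-stable, so~$A = \bar A_c$ is asymptotically stable and the observability Gramian~$\cO_\infty$ in~\eqref{ob_gram} (formed from~$C = G_1$ and~$A = \bar A_c$) is finite. Moreover, the $H_\infty$-norm~$\gamma$ of this system is exactly the quantity bounded in~\eqref{hinf2}. Therefore all the ingredients entering the right-hand side of~\eqref{ep_Hinf} — namely~$\lambda_{\max}^{1/2}(\cO_\infty)$ and~$\gamma$ — are well-defined and finite for the controller.

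With these identifications in place, the conclusion follows by applying Corollary~\ref{dpr_sta:thm} verbatim to the system~$(\bar A_c, -L_1, G_1, 0)$ with output noise~$w(t) \in \bR^{m_p}$ and~$W_t \sim \cN_{(t+1)m_p}(\mu,\Sigma)$: if~$\Sigma \succ 0$ is chosen so that~\eqref{ep_Hinf} holds for this quadruple, then the induced Gaussian mechanism is~$(\varepsilon,\delta)$-differentially private for~${\rm Adj}_2^c$ at every finite time~$t$ with~$\varepsilon>0$ and~$1/2>\delta>0$. This is exactly the claimed statement, which is why the paper says the proof directly follows and omits it.

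The only genuine subtlety — and the step I would flag as the main obstacle — is the bookkeeping around the exosystem state~$x_r$: the controller~\eqref{sys_con2} also has the feedforward terms~$G_2 x_r$ and~$\bar A_r x_r$, so strictly speaking its output is an affine, not linear, function of~$(\bar x_c, e)$. One must argue that the feedforward contribution of~$x_r$, being a fixed publicly-known signal, shifts both~$Y_{w,t}$ and~$Y'_{w,t}$ by the same deterministic amount and hence cancels in the adjacency comparison~\eqref{diffprv}; equivalently, the difference~$O_t(x_0'-x_0) + N_t(U_t'-U_t)$ that controls~$z$ in the proof of Theorem~\ref{dpr:thm} is unaffected by any common affine offset. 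Once this is noted, the reduction to Corollary~\ref{dpr_sta:thm} is clean and the remainder is routine.
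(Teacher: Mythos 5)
Your proposal is correct and follows exactly the route the paper intends: the paper omits the proof precisely because it is the verbatim application of Corollary~\ref{dpr_sta:thm} to the controller viewed as the system $(\bar A_c,-L_1,G_1,0)$ from $e$ to $u_p$, with stability supplied by requirements 1)--3). Your added remark that the public feedforward terms $G_2 x_r$ and $\bar A_r x_r$ shift both output distributions by the same deterministic amount and therefore cancel in the comparison~\eqref{diffprv} is the right way to dispose of the only nontrivial bookkeeping, and it is consistent with the paper's treatment of $x_r$ as public information.
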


In summary, the privacy-preserving controller with the prescribed differential privacy level is designed as follows. First, one designs the controller dynamics (\ref{sys_con2}) based on the LMIs~(\ref{LMI2}) and~(\ref{LMI}) and then design the noise $w$ based on the above theorem with (\ref{ep_Hinf}). In the LMIs, the design parameters reduce to~$\gamma$, the $H_{\infty}$-norm of the controller (\ref{sys_con2}). 

From (\ref{ep_Hinf}) (and Remark~\ref{gram:rem}), a smaller~$\gamma$ gives a smaller lower bound on the covariance matrix of the Gaussian noise, but making~$\gamma$ small may result in deterioration of the control performance. Moreover, adding noise~$w$ may result in deterioration of the control performance also. 
Let~$H(z)$ and~$K(z)$ denote the transfer functions of the plant~\eqref{sys_p} from~$u_p$ to~$y_p$ and controller~\eqref{sys_con2} from~$e$ to~$u_p$, respectively. The transfer function matrices of the closed-loop system from~$w$ to~$y_p$ is~$(I-H(z)K(z))^{-1}P(z)$. If the controller is designed such that the~$H_\infty$-norm of~$K(z)$ is sufficiently large, the output~$y_p$ of the closed-loop system is less influenced by~$w$. In contrast, this causes a decrease in the privacy level. Therefore, there is a trade-off between the control performance and the privacy level for privacy-preserving controller design.

If one additionally requires the $H_{\infty}$-norm of the closed-loop system not to be greater than~$\bar \gamma>0$, then one can use the following LMI:
\begin{align}
&\left[\begin{array}{cccccc}
Q & 0 & 0 & *& *  & *\\
0 & P & 0 & * & * &*\\
0 & 0 & \bar \gamma^2 I_{q_p} & * &* & *\\
Q A_p & Q B_p G_1 & Q B_p &Q & 0& 0\\
- \hat L_1 C_p & \overline P_{25}^\top & - \hat L_1 D_p& 0 & P &0\\
C_p & D_p G_1 & D_p & 0 & 0 &  I_{m_p}
\end{array}\right] \succ 0, \label{LMI3}\\
&\overline P_{25}^\top =P ( A_p + B_p G_1 )  + \hat L_1 C_p,\nonumber
\end{align}
where $*$ are suitable elements to make the matrix symmetric. The~$H_{\infty}$-norms of the controller and closed-loop system are made less than~$\gamma$ and~$\bar \gamma$, respectively, if LMIs~(\ref{LMI2}),~(\ref{LMI}) and~\eqref{LMI3} have solutions~$P$,~$Q$, and~$\hat L_1$.

\subsection{Private Data Estimation}\label{PDE:s}
In the previous subsections, we have studied privacy-preserving controller design. An approach to evaluating  the privacy level of the proposed controller is to utilize differential privacy. In systems and control, filtering is a central problem, and one may ask whether existing filtering techniques can be used for estimating private data. Therefore, in this subsection, we consider this estimation problem. It is expected that the obtained observations in this subsection can help in improving the privacy-preserving controller design method.

For state estimation, one can use the standard techniques of the optimal linear filters or smoothers. Thus, we reformulate the input estimation problem as a state estimation problem inspired by unknown input observer design~\cite{Bhattacharyya:78,HM:92}. Suppose that the designed controller (\ref{sys_con2}) is strongly input observable for the output $u_p$ and input $e$. Recall the notations for sequences $U_{p,2n}(t)$ and $E_{2n}(t)$ introduced in the introduction. In a similar manner as (\ref{Outseq}), the output sequence $U_{p,t}$ of the controller can be described by
\begin{align}
U_{p,2n}(t) = O_{2n} \bar x_c (t)+ N_{2n} E_{2n}(t) + N_{r,2n} X_{r,2n}(t), \label{Outseq_con}
\end{align}
where $A = \bar A_c$, $B = -L_1$, $C=G_1$, and $D=0$ for $O_{2n}$ and $N_{2n}$, and $N_{r,t}$ denotes $N_t$ for $A = \bar A_c$, $B = \bar A_r$, $C=G_1$, and $D=G_2$.

From (\ref{Kal_iob}), there exists a (not necessarily unique) matrix $K \in \bR^{(n +(n+1)m) \times (2n+1)q}$ such that
\begin{align}
K \left[\begin{array}{cc}
O_{2n} & N_{2n,n}
\end{array}\right] = I_{n + (n+1) m}. \label{InpinvK}
\end{align}
By using this $K$, define
\begin{align*}
&K_x:=\left[\begin{array}{cc}
I_n & 0
\end{array}\right] K, \\ 
&K_u:=\left[\begin{array}{ccc}
0 & I_m & 0
\end{array}\right] K.
\end{align*}
Then, from (\ref{Outseq_con}), 
\begin{align}
&K_x (U_{p,2n} - N_{r,2n} X_{r,2n}) \nonumber\\
& = \left[\begin{array}{cc}
I_n & 0
\end{array}\right] K \left[\begin{array}{cc}
O_{2n} & N_{2n,n}
\end{array}\right] \left[\begin{array}{c}
\bar x_c(0) \\ E_n
\end{array}\right] \nonumber\\
&= \left[\begin{array}{cc}
I_n & 0
\end{array}\right]\left[\begin{array}{c}
\bar x_c(0) \\ E_n
\end{array}\right] = \bar x_c(0),
\end{align}
and
\begin{align}
&K_u (U_{p,2n}(t) - N_{r,2n} X_{r,2n}(t)) \nonumber\\
&= \left[\begin{array}{ccc}
0 & I_m & 0
\end{array}\right]\left[\begin{array}{c}
\bar x_c(t) \\ E_n(t)
\end{array}\right]= e(t), \label{linv_out}
\end{align}

By substituting them into (\ref{sys_con2}), we have
\begin{align}
&\left\{\begin{array}{l}
u_p(t) = G_1 \bar x_c(t) + G_2 x_r(t),\\
\bar x_c(t+1) = \bar A_c  \bar x_c(t) + \bar A_r x_r(t)\\
\hspace{18mm} - L_1 K_u (U_{p,2n}(t) - N_{r,2n} X_{r,2n}(t)) ,\\
\bar x_c(0) = K_x (U_{p,2n} - N_{r,2n} X_{r,2n}),
\end{array}\right. \label{sys_con_ob2}
\end{align}
where recall that the state of the exosystem $x_r$ is a piece of public information. This system corresponds to a left inverse system of the controller. In order to estimate~$e$ from~$u_p$, one can use the state estimation of this model with the process and measurement noises~$\tilde v(t) \in \bR^{(2n+1)m_p}$ and $\tilde w(t) \in \bR^{m_p}$. 

Let $\tilde x_c(t)$ denote the state estimation of (\ref{sys_con_ob2}). Then, define
\begin{align*}
\tilde u_p(t) = G_1 \tilde x_c(t) + G_2 x_r(t).
\end{align*}
Finally from~(\ref{linv_out}) and $\tilde U_{p,2n}(t)$, the estimation of $e(t)$ denoted by $\tilde e(t)$ can be computed by
\begin{align}
\tilde e(t) = K_u (\tilde U_{p,2n}(t) - N_{r,2n} X_{r,2n}(t)).
\end{align}

It is worth mentioning that in (\ref{sys_con_ob2}), future information of $u_p(t)$, namely $U_{p,2n}(t)$ is used in order to estimate $e(t)$. In other words, at time~$t$, one can estimate the historic data $e(t-2n)$, and thus the private data estimation can be formulated as a smoothing problem. There are several techniques for designing filters or smoothers such as the Kalman filter or its smoother, and one of them can be employed for the state estimation. Typically, for the filtering and smoothing problems,  i.i.d. Gaussian noises are used as the process and measurement noises. Therefore, adding non-i.i.d. or non-Gaussian noises to the privacy-preserving controller could be useful for protecting the private data than adding i.i.d. Gaussian noises.

The above is one approach to input data estimation. For strongly input observable systems, one can directly estimate~$(x_0,u(0))$ from~$E_t$ and the probability density function of noise by extending the results in~\cite{HCG:18}. The paper~\cite{HCG:18} further develops an updating algorithm of the estimation forward in time.

\section{Examples}\label{Ex:s}
We revisit the DC microgrids~\eqref{ex:ps} with parameters in~\cite{CTP:18} for~$i=1,2$, where~$R_i=0.2[\Omega]$,~$R_{i,j}=70[{\rm m\Omega}]$,~$L_i=1.8[{\rm mH}]$, and~$C_i=2.2[{\rm mF}]$ and design a privacy-preserving controller, where the sampling period is~$10^{-3}[{\rm s}]$. We consider that originally~$I_i=0[{\rm A}]$ and~$V_i=380[{\rm V}]$ are achieved with~$I_{1,2}=0[{\rm A}]$. Then the user~$1$ starts to use more electricity, which causes~$I_1=-4[{\rm A}]$. The goal is to achieve~$I_i=0[{\rm A}]$ and~$V_i=380[{\rm V}]$ again by protecting from user~$2$ the information that user~$1$ changes its electricity consumption. From the control objective~\eqref{ex:objective}, the exosystem~\eqref{sys_ref} is given by~$A_r=C_r=I_4$. In this problem setting, Assumptions~\ref{reg:ass1}-\ref{reg:ass4} hold. 

We design a privacy-preserving tracking controller. First, we design $G_1$ stabilizing $A_p + B_p G_1$ based on the following optimal control problem: 
\begin{align*}
J = \sum_{t = 0}^{\infty} |x_p(t)|_2^2 + |u_p(t)|_2^2.
\end{align*}
Solving the corresponding Riccati equation, $G_1$ is obtained as
\begin{align*}
G_1 =\left[\begin{array}{ccccc}
-0.850 & 0.037 & -0.0461 & -0.0007 & 0.229\\
0.0370 & -0.850 & -0.0007 & -0.0461 & -0.229
\end{array}\right]. 
\end{align*}
With~$X$ and~$U$ in Assumption~\ref{reg:ass4}, $G_2 = U - G_1 X$ is computed as
\begin{align*}
G_2 = 
\left[\begin{array}{cccc}
0.869 & -0.0019 & 0.873 & 0.174\\
-0.0019 & 0.869 & 0.174 & 0.873
\end{array}\right]. 
\end{align*}

Second, the LMIs (\ref{LMI2}) and (\ref{LMI}) have solutions $P$ and $\hat L_1$ for $\gamma = 0.365$. The matrix $L_1 = P^{-1} \hat L_1$ is 
\begin{align*}
L_1 = \left[\begin{array}{rrrrr}
-0.193 & 0.0088 & 0.0828 & 0.0111\\
0.0088 & -0.193 & 0.0111 & 0.0828\\
-0.0717 & 0.0072 & -0.134 & -0.0129\\
 0.0072 & -0.0717 & -0.0129 & -0.134\\
0.0253 & -0.0253 & -0.0504 & 0.0504
\end{array}\right].
\end{align*}
In this scenario, the initial state of the controller is chosen as~$[0 \ 0 \ 380 \ 380 \ 0]^\top$ because the state of the controller takes this value when the control objective is achieved. 

Suppose that each user adds the Gaussian noise to~$I_i$ and~$V_i$ before sending them to the power company. Based on our observation for input observability, we design input noises from the principal components of~$N_{10,5}^\top N_{10,5}$ of the controller, where the initial state of the controller is assumed to be a piece of public information. Its eigenvalues are shown in Fig.~\ref{eig:fig}. Let~$v_{j,i}$ be the projection of the normalized eigenvectors corresponding to the eigenvalue~$\lambda_j$ onto the~$u_i(0)$-space. By using non-zero~$\lambda_j$, we compute
\begin{align*}
\sum_{j=21}^{40} \lambda_j v_{1,j} v_{1,j}^\top
= \sum_{j=21}^{40} \lambda_j v_{2,j} v_{2,j}^\top = \left[ \begin{array}{rr}
0.0347 & -0.0106\\
-0.0106 & 0.0129
\end{array}\right].
\end{align*}
Since larger~$\lambda_j$ characterizes less private information of~$u_i(0)$, it is reasonable to add larger noise to~$u_i(0)$ corresponding to larger~$\lambda_j$. Therefore, we scale by a positive constant $a$, namely
\begin{align*}
\overline \Sigma_1 
=  a^2 \left[ \begin{array}{rr}
0.0347 & -0.0106\\
-0.0106 & 0.0129
\end{array}\right]
\end{align*}
as the covariance matrix of the input noise for each user. The condition~\eqref{epinoise} for $(\varepsilon,\delta)$-differential privacy holds if
\begin{align*}
a \ge 10.8 c \cR (\varepsilon,\delta). 
\end{align*}
Let~$c = 1$. In privacy related literatures in systems and control~\cite{HE:17,NP:14,LM:18}, $\varepsilon$ and~$\delta$ are chosen to be values in $[0.3, 1.6]$ and $[0.01,0.05]$, respectively. We use similar values.
For instance, for~$\varepsilon = 0.3$ and~$\delta = 0.0446$ or~$\varepsilon = 0.42$ and~$\delta = 0.00820$, the condition holds for~$a = 64.3$. For~$\varepsilon = 0.3$ and~$\delta = 0.0446$ or~$\varepsilon= 0.69$ and~$\delta = 0.00820$, the condition holds for~$a=39.7$. For~$\varepsilon=1.4$ and~$\delta=0.0446$, the condition holds for~$a = 15.8$.

\begin{figure}[!t]
\begin{center}
\includegraphics[width=80mm]{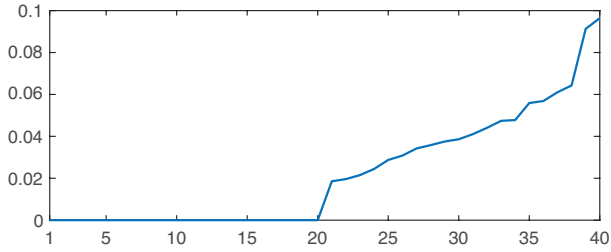}\vspace{-5mm}
\end{center}
\caption{The value of each eigenvalue of~$N_{10,5}^\top N_{10,5}$}
\label{eig:fig}
\end{figure}

Figure~\ref{sim:fig} shows~$y_p$ and~$u_p$ of the closed-loop system for the four cases: no noise,~$a=15.8$,~$a=39.7$, and~$a=64.3$. If there is no noise, the tracking error converges to zero. However, the change of $I_1$ affects clearly~$I_2$,~$V_2$, and~$u_2$. Therefore, user~$2$ can identify that user~$1$ starts to use more electricity. In contrast, privacy-preserving controllers with noises mask the effects caused by the electricity consumption of user $1$ against user $2$. Although adding large noise increases the privacy level, it unfortunately makes~$I_i$,~$V_i$, and~$u_i$ fluctuate. Therefore, the balance between required control and privacy performances is needed when designing the noise. In this case, the noise with~$a=15.8$ is enough to protect the fluctuation of the input~$u_2$.

\begin{figure}[tb]
\begin{center}
\includegraphics[width=85mm]{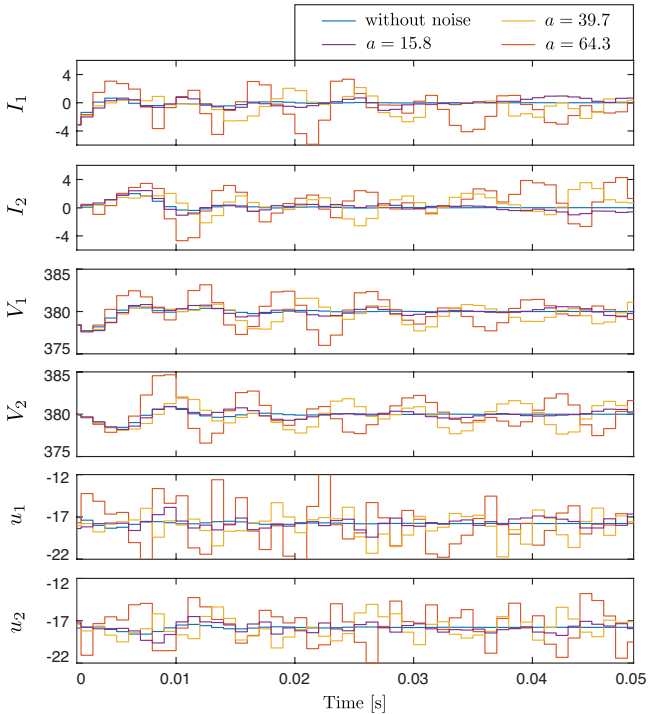}\vspace{-5mm}
\end{center}
\caption{The outputs and inputs of the closed-loop system controlled by the privacy-preserving controller}
\label{sim:fig}
\end{figure}

\section{Toward Nonlinear Mechanisms}\label{NDP:s}
The objective in this section is to extend some of our results to nonlinear mechanisms toward nonlinear privacy-preserving controller design. The output regulation and $H_{\infty}$-norm analysis are extended to nonlinear systems at least locally; see e.g.~\cite{LB:96,Huang:04}. Therefore, if differential privacy analysis is extended, one can design a nonlinear privacy-preserving controller at least locally in a similar manner as the linear case. In this section, we proceed with differential privacy analysis of the Gaussian mechanism induced by a nonlinear dynamical system and output Gaussian noise. For nonlinear dynamical systems, even if Gaussian noise is added to the input channel, the output variable is not Gaussian in general, and thus we do not analyze the mechanisms induced by input noise. 

\subsection{Differential Privacy with Output Noise}
Consider the following nonlinear discrete-time control system with output noise
\begin{align}
\left\{\begin{array}{l}
x(t+1) = f(x(t),u(t)),\\
y_w(t) = h(x(t),u(t)) + w(t),
\end{array}\right. \label{nonlinear}
\end{align}
where~$f:\bR^n \times \bR^m \to \bR^n$ and $h: \bR^n \times \bR^m \to \bR^q$ are continuous. Its solution $x(t)$ starting from $x_0$ controlled by $U_{t-1}$ is denoted by $\phi(t,x_0,U_{t-1})$, where $\phi(0,x_0,U_{-1}) :=x_0$. The output sequence $Y_{w,t}$ can be described by
\begin{align}
Y_{w,t} = H_t(x_0,U_t) + W_t, \label{disoutseq}
\end{align}
where $H_t: \bR^n \times \bR^{(t+1)m} \to \bR^{(t+1)q}$ is
\begin{align}
H_t(x_0,U_t)
:=\left[\begin{array}{c}
h(\phi(0,x_0,U_{-1}),u(0)) \\ h(\phi(1,x_0,U_0),u(1)) \\ \vdots \\ h(\phi(t,x_0,U_{t-1}),u(t)) 
\end{array}\right]. \label{Ht}
\end{align}
Now, we are ready to obtain an extension of Theorem~\ref{dpr:thm} to the nonlinear Gaussian mechanism by using input data dependent Gaussian noise.

\begin{secthm}\label{dpr_2:thm}
The Gaussian mechanism (\ref{disoutseq}) induced by $W_t \sim \cN_{(t+1)q}(\mu (x_0,U_t),\Sigma (x_0,U_t))$ is $(\varepsilon,\delta)$-differentially private for ${\rm Adj}_2^c$ at a finite time $t$ with $\varepsilon (x_0,U_t)> 0$ and $1/2 > \delta (x_0,U_t)> 0$ if the covariance matrix $\Sigma(x_0,U_t) \succ 0$ is chosen such that
\begin{align}
&\inf_{| [\bar x_0; \bar U_t]  |_2  \le c}\frac{1}{\bar H_t(x_0, U_t, \bar x_0, \bar U_t)} \ge c \cR \left(\varepsilon (x_0,U_t),\delta (x_0,U_t)\right),\label{ep_non}
\end{align}
where
\begin{align*}
&\bar H_t(x_0, U_t, \bar x_0, \bar U_t)\\
&:= | H_t(x_0+ \bar x_0,U_t+ \bar U_t) - H_t(x_0,U_t) |_{\Sigma^{-1}(x_0,U_t)}
\end{align*}
for any $(x_0 ,U_t ) \in \bR^n \times \bR^{(t+1)m}$.
\end{secthm}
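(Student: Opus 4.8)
The plan is to run the argument of the proof of Theorem~\ref{dpr:thm} essentially verbatim, isolating the single place where the nonlinearity of $f$ and $h$ intervenes. First I would fix an arbitrary base point $(x_0,U_t)\in\bR^n\times\bR^{(t+1)m}$ together with an arbitrary adjacent point, which I write as $(x_0+\bar x_0,\,U_t+\bar U_t)$ with $|[\bar x_0;\bar U_t]|_2\le c$, and abbreviate $\Sigma:=\Sigma(x_0,U_t)$, $\mu:=\mu(x_0,U_t)$. The key observation is that, once the two input data are frozen, $H_t(x_0,U_t)$ and $H_t(x_0+\bar x_0,U_t+\bar U_t)$ are merely two deterministic vectors in $\bR^{(t+1)q}$, so the two laws compared in the differential-privacy inequality, $\cN_{(t+1)q}(H_t(x_0,U_t)+\mu,\Sigma)$ and $\cN_{(t+1)q}(H_t(x_0+\bar x_0,U_t+\bar U_t)+\mu,\Sigma)$, are two Gaussians with a common covariance differing only by the mean shift $H_t(x_0,U_t)-H_t(x_0+\bar x_0,U_t+\bar U_t)$. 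Hence the nonlinearity has no effect whatsoever on the probabilistic core of the proof.

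Second I would import the Gaussian-mechanism estimate used in Theorem~\ref{dpr:thm} (equivalently \cite[Theorem~3]{NP:14}). Since the privacy-loss map $s\mapsto\log\big(p(s;H_t(x_0,U_t)+\mu,\Sigma)/p(s;H_t(x_0+\bar x_0,U_t+\bar U_t)+\mu,\Sigma)\big)$ is affine in $s$, it is Gaussian under $s\sim\cN_{(t+1)q}(H_t(x_0,U_t)+\mu,\Sigma)$, and one obtains
\begin{align*}
&\bP\big(H_t(x_0,U_t)+W_t\in\cS\big)\\
&\le {\rm e}^{\varepsilon}\,\bP\big(H_t(x_0+\bar x_0,U_t+\bar U_t)+W_t\in\cS\big)+\bP\big(\tilde W\ge \varepsilon z-\tfrac{1}{2z}\big),
\end{align*}
where $\tilde W\sim\cN(0,1)$ and $z=\bar H_t(x_0,U_t,\bar x_0,\bar U_t)^{-1}$. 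Exactly as in Theorem~\ref{dpr:thm} the mechanism is $(\varepsilon,\delta)$-differentially private at $(x_0,U_t)$ provided $\cQ(\varepsilon z-\tfrac{1}{2z})\le\delta$, i.e.\ $z\ge\cR(\varepsilon,\delta)$; requiring this uniformly over all adjacent points, i.e.\ over $|[\bar x_0;\bar U_t]|_2\le c$, is precisely condition~\eqref{ep_non} with $\varepsilon=\varepsilon(x_0,U_t)$ and $\delta=\delta(x_0,U_t)$. Specialising to the affine map $H_t(x_0,U_t)=O_tx_0+N_tU_t$ one checks that $\bar H_t=|[O_t\ N_t][\bar x_0;\bar U_t]|_{\Sigma^{-1}}$ and that the infimum over the ball is governed by $\lambda_{\max}^{1/2}(\cO_{\Sigma,t})$, so~\eqref{ep_non} becomes an eigenvalue condition on $\cO_{\Sigma,t}$ of the form~\eqref{ep} and Theorem~\ref{dpr:thm} is recovered.

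The only genuinely new point — and the reason $\varepsilon$, $\delta$ and the covariance $\Sigma$ are allowed to depend on $(x_0,U_t)$ — is the sensitivity term $\sup_{|[\bar x_0;\bar U_t]|_2\le c}\bar H_t(x_0,U_t,\bar x_0,\bar U_t)$. For nonlinear $f,h$ this is no longer the induced norm of a fixed matrix but a (generally nonconvex) optimisation over the adjacency ball whose value varies with the base point and need not be bounded uniformly over all base points; this is exactly what forces the pointwise, input-data-dependent formulation and what the input-dependent noise $\cN(\mu(x_0,U_t),\Sigma(x_0,U_t))$ is meant to absorb. The step I would actually have to verify is that, for each fixed $(x_0,U_t)$, this supremum is finite so that the left-hand side of~\eqref{ep_non} is strictly positive; this follows from continuity of $f$ and $h$ (hence of $\phi$ and $H_t$) together with compactness of the closed ball $\{\,[\bar x_0;\bar U_t]:|[\bar x_0;\bar U_t]|_2\le c\,\}$. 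Once that is in hand, choosing $\Sigma(x_0,U_t)$ to satisfy~\eqref{ep_non} completes the argument along the lines of Theorem~\ref{dpr:thm}.
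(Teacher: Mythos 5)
Your proposal is correct and follows essentially the same route as the paper's proof: reduce to the Gaussian mean-shift bound underlying Theorem~\ref{dpr:thm} with $z=\bar H_t(x_0,U_t,\bar x_0,\bar U_t)^{-1}$, then bound $z$ uniformly over the adjacency ball so that \eqref{ep_non} implies \eqref{z,delta}. Your additional observations --- that the nonlinearity enters only through the deterministic mean shift between two Gaussians with common covariance, and that continuity of $f,h$ plus compactness of the ball guarantees the sensitivity supremum is finite --- are refinements of, not departures from, the paper's argument.
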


\begin{proof}
In a similar manner as for Theorem~\ref{dpr:thm}, one obtains (\ref{z,delta}) for
\begin{align*}
z=|H_t(x_0, U_t) - H_t(x'_0, U'_t)|_{\Sigma^{-1}(x_0,U_t)}^{-1}.
\end{align*}
Define $\bar x_0 = x'_0 - x_0$ and $\bar U_t = U'_t - U_t$. Then, $((x_0,U_t),(x'_0,U'_t)) \in {\rm Adj}_2^c $ implies~$| [\bar x_0; \bar U_t]  |_2  \le c$. It follows that
\begin{align*}
z= \bar H_t(x_0, U_t, \bar x_0, \bar U_t) \le \sup_{| [\bar x_0; \bar U_t]  |_2  \le c}\bar H_t(x_0, U_t, \bar x_0, \bar U_t),
\end{align*}
for any $((x_0,U_t),(x'_0,U'_t)) \in {\rm Adj}_2^c$. Therefore, if (\ref{ep_non}) holds, (\ref{z,delta}) holds.
\end{proof}

In Theorem~\ref{dpr_2:thm}, the mean value and variance can be made functions of~$(x_0,U_t)$ under the reasonable assumption that a system manager designing noise knows the initial state and inputs of the system. Even if the system manager does not know them exactly, noise can still be designed by using a constant mean value and variance. Note that using $(x_0,U_t)$-dependent noise does not break privacy guarantee because information of noise is not published in general.

The paper~\cite{LeNy:20} studies nonlinear observer design based on differential privacy in the contraction framework with constant metrics. The provided results can be extended to differential privacy analysis of nonlinear systems, but as they stand, only for stable systems. In contrast, Theorem~\ref{dpr_2:thm} can be used for unstable systems and considers a more general mean value and variance depending on~$(x_0,U_t)$.

In a similar manner as Remark~\ref{Laplace:rem}, for the i.i.d. Laplace noise $w_i(t)$, $i=1,\dots,q$, $t \in \bZ_+$ with the variance $\mu (x_0,U_t) \in \bR$ and distribution $b(x_0,U_t) >0$, the mechanism (\ref{disoutseq})  is $(\varepsilon,0)$-differentially private for ${\rm Adj}_1^c$ at a finite time $t$ with $\varepsilon > 0$ if
\begin{align}
b(x_0,U_t) \ge \sup_{| [\bar x_0; \bar U_t]  |_1 \le c}\frac{\left| H_t(x_0+ \bar x_0,U_t+ \bar U_t) - H_t(x_0,U_t) \right|_1}{\varepsilon (x_0,U_t)}
\label{laplan_non}
\end{align}
for any $(x_0,U_t) \in \bR^n \times \bR^{(t+1)m}$. Furthermore, suppose that $f$ and $h$ are smooth. Let $\gamma(s)=x_0 + s (\bar x_0 - x_0)$ and $\nu(s)=U_t + s (\bar U_t - U_t)$ for $s \in [0,1]$. Then, 
\begin{align*}
&\left| H_t(x_0+ \bar x_0,U_t+ \bar U_t) - H_t(x_0,U_t) \right|_1 \nonumber\\
&= \left| \int_0^1 \frac{\partial H_t(\gamma(s),\nu(s))}{\partial (x_0,U_t)} 
\left[\begin{array}{c}
x_0 \\ U_t
\end{array}\right] ds \right|_1\\
&\le c \left| \int_0^1 \frac{\partial H_t(\gamma(s),\nu(s))}{\partial (x_0,U_t)} ds \right|_1\\
&\le c \sup_{(x_0,U_t) \in \bR^n \times \bR^{(t+1)m}}\left| \frac{\partial H_t(x_0,U_t)}{\partial (x_0,U_t)} \right|_1. 
\end{align*}
Therefore, (\ref{laplan_non}) holds if
\begin{align*}
b(x_0,U_t)
\ge \frac{c}{\varepsilon (x_0,U_t)} \sup_{(x_0,U_t) \in \bR^n \times \bR^{(t+1)m}}\left| \frac{\partial H_t(x_0,U_t)}{\partial (x_0,U_t)} \right|_1.
\end{align*}
 Note that in the linear case $\partial H_t(x_0,U_t)/\partial (x_0,U_t)$ is nothing but the matrix $[\begin{array}{cc}
O_t & N_t
\end{array}]$.

For Laplacian noise, differential privacy is characterized by the matrix $\partial H_t(x_0,U_t)/\partial (x_0,U_t)$. This matrix has a strong connection with the \emph{local strong input observability} of the nonlinear system~(\ref{nonlinear}); the concept of strong observability can be extended to nonlinear systems as for local observability~\cite{NS:90} based on the distinguishability of a pair of initial states and initial inputs. In fact, one can derive a necessary and sufficient condition for \emph{local strong input observability} in terms of the differential one-forms corresponding to $\partial H_t(x_0,U_t)/\partial (x_0,U_t)$ as follows: there exists $t \in \bZ_+$ such that
\begin{align*}
{\rm span}\{{\rm d}H_t(x_0,U_t)\} \cap {\rm span}\{{\rm d}x, {\rm d}u_0\} = {\rm span}\{{\rm d}x, {\rm d}u_0\}
\end{align*}
under the constant dimensional assumption for all $(x_0,U_t) \in \bR^n \times \bR^{(t+1)q}$; see e.g.~\cite{NS:90} for similar discussions for local observability. This is an extension of the condition (\ref{InpinvK}). In contrast to the qualitative criterion for strong input observability, it is still not straightforward to extend the concept of Gramians. In fact, there is no clear extension of Gramians to nonlinear systems even for controllability and observability although the concept of controllability and observability and their corresponding energy functions have been extended~\cite{NS:90,Scherpen:93,KS:17}.

\subsection{Incrementally Input-to-Output Stable Systems}\label{IIOSS:s}
In Section~\ref{IO:s}, we mention that the $H_{\infty}$-norm gives an upper bound of the differential privacy level. This observation can help the privacy-preserving controller design. In this subsection, we aim at extending this result to the nonlinear case based on the concept of the incremental input-to-output stability (IOS).

For a nonlinear system, several types of gains (or called estimations) are defined; e.g. see~\cite{Sontag:08}. Especially, $L^2 \to L^2$ estimation is extended to nonlinear systems as input-to-state stability (ISS)~\cite{Sontag:08}, which is also extended to incremental properties in~\cite{Angeli:09}. Incremental ISS can be readily extended to input-to-output operators, discrete-time systems, and arbitrary $L^p \to L^p$ estimations as follows. In Appendix, we give its Lyapunov characterization.
\begin{secdefn}\label{IOS:def}
A nonlinear system~(\ref{nonlinear}) is said to be \emph{incrementally IOS} (with respect to the $p$-norm) if the output $h(\phi(t,x_0,U_{t-1}),u(t))$ exists for all $t\in\bZ_+$, for any $x_0 \in \bR^n$ and $u:\bZ_+ \to \bR^m$, and there exist class $\cK$ functions $\alpha$ and $\gamma$  such that
\begin{align}
&\sum_{\tau = 0}^t  | h(\phi(\tau,x_0,U_{t-1}),u(t)) - h(\phi(\tau,x'_0,U'_{t-1}),u'(t)) |_p \nonumber\\
&\le \alpha (|x_0 - x'_0 |_p) + \sum_{\tau = 0}^t\gamma (|u(\tau) - u'(\tau) |_p), \ t \in \bZ_+ \label{iIOS}
\end{align}
for any $(x_0,x'_0) \in \bR^n \times \bR^n$ and $(U_t,U'_t) \in \bR^{(t+1)m} \times \bR^{(t+1)m}$.
\red
\end{secdefn} 

In fact, $\alpha$ and $\gamma$ do not need to belong to class $\cK$ for differential privacy analysis, and non-negative functions are enough. To connect differential privacy analysis with ISS, we consider class $\cK$ functions. 

In the linear case, as shown in Corollary~\ref{dpr_sta:thm}, the $H_{\infty}$-norm can be used for designing the Gaussian noise. Now, we obtain an extension of Corollary~\ref{dpr_sta:thm} to the nonlinear IOS system based on Theorem~\ref{dpr_2:thm}. The proof directly follows, and thus is omitted.
\begin{seccor}\label{IOS:cor}
Let $W_t \sim \cN_{(t+1)q}(\mu (x_0,U_t),\Sigma (x_0,U_t))$ be a non-degenerate multivariate Gaussian noise. Then, the Gaussian mechanism (\ref{disoutseq}) induced by an incrementally IOS nonlinear system~(\ref{nonlinear}) (with respect to $2$-norm) is $(\varepsilon,\delta)$-differentially private for  ${\rm Adj}_2^c$ at a finite time $t$ with $\varepsilon (x_0,U_t)> 0$ and $1/2 > \delta (x_0,U_t)> 0$ if the covariance matrix $\Sigma(x_0,U_t) \succ 0$ is chosen such that
\begin{align*}
&\lambda_{\min}^{1/2}(\Sigma (x_0,U_t))\nonumber\\
&\ge (\alpha(c) +(t+1) \gamma (c)) \cR \left(\varepsilon (x_0,U_t),\delta (x_0,U_t)\right)
\end{align*}
for any $x_0 \in \bR^n$ and $U_t \in \bR^{(t+1)m}$. 
\red
\end{seccor}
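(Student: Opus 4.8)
The plan is to combine the incremental IOS estimate~\eqref{iIOS} with the sufficient condition~\eqref{ep_non} of Theorem~\ref{dpr_2:thm}; the argument is just a chain of norm inequalities, which is why the authors omit it. Fix an arbitrary $(x_0,U_t)\in\bR^n\times\bR^{(t+1)m}$ and abbreviate $\Sigma:=\Sigma(x_0,U_t)$. First I would replace the weighted norm defining $\bar H_t$ by a Euclidean one: from $v^\top\Sigma^{-1}v\le\lambda_{\max}(\Sigma^{-1})|v|_2^2=\lambda_{\min}^{-1}(\Sigma)|v|_2^2$ we obtain, for every perturbation $(\bar x_0,\bar U_t)$,
\[
\bar H_t(x_0,U_t,\bar x_0,\bar U_t)\le\lambda_{\min}^{-1/2}(\Sigma)\bigl|H_t(x_0+\bar x_0,U_t+\bar U_t)-H_t(x_0,U_t)\bigr|_2 .
\]

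Next I would bound the Euclidean norm of this output sequence increment. By~\eqref{Ht}, $H_t$ is the block stacking of the $t+1$ instantaneous outputs $h(\phi(\tau,\cdot,\cdot),\cdot)$; applying $|\cdot|_2\le|\cdot|_1$ over those $t+1$ blocks and then~\eqref{iIOS} of Definition~\ref{IOS:def}, with the perturbed trajectory started at $x_0+\bar x_0$ and driven by $U_t+\bar U_t$, yields
\[
\bigl|H_t(x_0+\bar x_0,U_t+\bar U_t)-H_t(x_0,U_t)\bigr|_2\le\alpha\bigl(|\bar x_0|_2\bigr)+\sum_{\tau=0}^{t}\gamma\bigl(|\bar u(\tau)|_2\bigr).
\]
If $\bigl|[\bar x_0;\bar U_t]\bigr|_2\le c$ then $|\bar x_0|_2\le c$ and $|\bar u(\tau)|_2\le c$ for every $\tau$, so monotonicity of $\alpha,\gamma\in\cK$ gives $\alpha(|\bar x_0|_2)+\sum_{\tau=0}^{t}\gamma(|\bar u(\tau)|_2)\le\alpha(c)+(t+1)\gamma(c)$. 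Chaining the three estimates, $\bar H_t(x_0,U_t,\bar x_0,\bar U_t)\le\lambda_{\min}^{-1/2}(\Sigma)\bigl(\alpha(c)+(t+1)\gamma(c)\bigr)$ uniformly over the $c$-adjacency ball.

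Consequently $\sup_{|[\bar x_0;\bar U_t]|_2\le c}\bar H_t\le\lambda_{\min}^{-1/2}(\Sigma)\bigl(\alpha(c)+(t+1)\gamma(c)\bigr)$, i.e.\ $\inf_{|[\bar x_0;\bar U_t]|_2\le c}1/\bar H_t\ge\lambda_{\min}^{1/2}(\Sigma)/(\alpha(c)+(t+1)\gamma(c))$. Under the hypothesis $\lambda_{\min}^{1/2}(\Sigma)\ge\bigl(\alpha(c)+(t+1)\gamma(c)\bigr)\cR\bigl(\varepsilon(x_0,U_t),\delta(x_0,U_t)\bigr)$ the right-hand side is at least $\cR\bigl(\varepsilon(x_0,U_t),\delta(x_0,U_t)\bigr)$, so~\eqref{ep_non} holds at $(x_0,U_t)$, and Theorem~\ref{dpr_2:thm} yields $(\varepsilon,\delta)$-differential privacy for ${\rm Adj}_2^c$ at the finite time $t$; since $(x_0,U_t)$ was arbitrary, we are done. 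I do not expect any real obstacle here: the only steps needing a little care are the weighted-to-Euclidean bound and the $|\cdot|_2\le|\cdot|_1$ reduction that turns the per-step estimate~\eqref{iIOS} into a bound on the stacked vector $H_t$; everything else is just monotonicity of class-$\cK$ functions, which is precisely why the authors call the proof immediate.
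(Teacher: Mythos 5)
Your proof is correct and is precisely the argument the paper has in mind (the paper omits it as ``directly following'' from Theorem~\ref{dpr_2:thm}): the weighted-to-Euclidean bound via $\lambda_{\min}(\Sigma)$, the $\ell_2\le\ell_1$ reduction over the $t+1$ output blocks, the incremental IOS estimate of Definition~\ref{IOS:def} with $p=2$, and monotonicity of the class-$\cK$ functions. One small caveat: your chain delivers $\inf 1/\bar H_t\ge\cR(\varepsilon,\delta)$, which is exactly the condition $z\ge\cR$ in (\ref{z,delta}) that the proof of Theorem~\ref{dpr_2:thm} actually uses to conclude differential privacy, rather than the literal inequality (\ref{ep_non}) with its extra multiplicative factor $c$; that factor is an artifact of the linear case (where the perturbation magnitude is not yet absorbed into the sensitivity bound) and is here already accounted for by evaluating $\alpha$ and $\gamma$ at $c$, so your conclusion stands as written.
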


\section{Conclusion}~\label{Con:s}
In this paper, we have studied differential privacy of Gaussian mechanisms induced by discrete-time linear systems. First, we have analyzed differential privacy in terms of strong input observability and then have clarified that the differential privacy level is characterized by the maximum eigenvalue of the input observability Gramian. In other words, small noise is enough to make the less input observable Gaussian mechanism highly differentially private. Moreover, we have shown that the mechanisms induced by input and output noises have the same differential privacy level for suitable covariance matrices. Next, we have developed a privacy-preserving controller design method, which can make a linear system highly private by adding small noise. Finally, we have briefly mentioned differential privacy analysis of incrementally IOS nonlinear systems. 

Although we have focused on differential privacy in this paper, our analysis and controller design can be tools for studying more general privacy issues of control systems. Differential privacy has been originally proposed in static data analysis, and one may extend this concept to dynamical systems further or develop new privacy concepts for dynamical systems. For privacy-preserving controller design, we have first designed a controller satisfying a certain control performance and then added noise to protect private information. There remain several interesting research directions. One is to investigate an updating method for the covariance matrix of noise forward in time based on the idea of Kalman filter design. Another is to develop a randomized control mechanism guaranteeing a certain control performance, which may enable us to design the controller and noise at the same time.

In general, measurement and input noises, disturbance and model error make analysis and controller design difficult and deteriorate the control performance, and thus they are regarded as troubles. However, they improve the system's privacy level. Therefore, the privacy-preserving controller design reduces to the trade-off between the privacy level and control performance.

{\it Acknowledgement:} 
We thank Dr. Michele Cucuzzella, University of Groningen for fruitful discussions on DC microgrids.

\appendix
\renewcommand{\thesecthm}{\Alph{secthm}}
\section{Incremental IOS Analysis}
In this appendix, we provide a sufficient condition for incremental IOS. 
\begin{secthm}\label{IOS:thm}
A nonlinear system~(\ref{nonlinear}) is incrementally IOS if there exist a continuous function $V:\bR^n \times \bR^n \to \bR_+$, constants $c_1>0$, $\lambda \in (0,1)$, class $\cK$ functions $\sigma_1,\sigma_2$, and a class $\cK_{\infty}$ function $\alpha_2$ such that
\begin{align}
&\hspace{-2mm}c_1|h(x_0,u)-h(x'_0,v) |_p \le  V(x_0,x'_0)  + \sigma_1(|u - u'|_p),\label{firstcond}\\
&\hspace{-2mm}V(x_0,x'_0)\le \alpha_2(|x_0 - x'_0|_p), \label{firstcond2}\\
&\hspace{-2mm}V(f(x_0,u), f(x'_0,u') ) \le  \lambda V(x_0,x'_0)  + \sigma_2 (|u - u'|_p)\label{second}
\end{align}
for any $(x_0,x'_0) \in \bR^n\times \bR^n$ and $(u,u') \in \bR^m\times \bR^m$.
\end{secthm}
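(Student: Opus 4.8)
The plan is to iterate the incremental dissipation inequality~\eqref{second} along a pair of trajectories and then sum the resulting bound against the output estimate~\eqref{firstcond}. First I would fix $x_0,x_0'\in\bR^n$ and inputs $u,u':\bZ_+\to\bR^m$, and abbreviate
\begin{align*}
V_\tau := V\bigl(\phi(\tau,x_0,U_{\tau-1}),\phi(\tau,x_0',U_{\tau-1}')\bigr), \qquad d_\tau := |u(\tau)-u'(\tau)|_p .
\end{align*}
Because $f$ is a total continuous map on $\bR^n\times\bR^m$, every $\phi(\tau,\cdot,\cdot)$ and hence every output appearing in~\eqref{Ht} is well defined for all $\tau\in\bZ_+$, so the existence requirement in Definition~\ref{IOS:def} holds automatically. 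Evaluating~\eqref{second} at $(\phi(\tau,x_0,U_{\tau-1}),u(\tau))$ and $(\phi(\tau,x_0',U_{\tau-1}'),u'(\tau))$ gives the scalar recursion $V_{\tau+1}\le\lambda V_\tau+\sigma_2(d_\tau)$, which an induction unrolls to
\begin{align*}
V_\tau \le \lambda^\tau V_0 + \sum_{k=0}^{\tau-1}\lambda^{\tau-1-k}\sigma_2(d_k),
\end{align*}
with $V_0=V(x_0,x_0')\le\alpha_2(|x_0-x_0'|_p)$ by~\eqref{firstcond2}.

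Next I would sum this over $\tau=0,\dots,t$. The geometric factor contributes $\sum_{\tau\ge 0}\lambda^\tau=1/(1-\lambda)$ to the initial-state term, and after interchanging the order of summation in the double sum each $\sigma_2(d_k)$ is weighted by $\sum_{\tau=k+1}^{t}\lambda^{\tau-1-k}\le 1/(1-\lambda)$, so
\begin{align*}
\sum_{\tau=0}^{t}V_\tau \le \frac{\alpha_2(|x_0-x_0'|_p)}{1-\lambda} + \frac{1}{1-\lambda}\sum_{\tau=0}^{t}\sigma_2(d_\tau).
\end{align*}
Applying~\eqref{firstcond} termwise along the two trajectories, summing over $\tau=0,\dots,t$, dividing by $c_1>0$, and inserting the last bound yields exactly the incremental IOS estimate~\eqref{iIOS} of Definition~\ref{IOS:def} with
\begin{align*}
\alpha(r):=\frac{\alpha_2(r)}{c_1(1-\lambda)}, \qquad \gamma(r):=\frac{\sigma_1(r)}{c_1}+\frac{\sigma_2(r)}{c_1(1-\lambda)} .
\end{align*}
It then remains to observe that $\alpha$ is a positive multiple of the class $\cK_\infty$ function $\alpha_2$ and $\gamma$ is a positive combination of the class $\cK$ functions $\sigma_1,\sigma_2$, hence both are continuous, strictly increasing, and vanish at the origin, i.e.\ of class $\cK$ as required.

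I do not anticipate an essential obstacle: this is the incremental, output-map counterpart of the standard ISS--Lyapunov sufficiency argument. The one point that deserves attention is the summation bookkeeping, in particular checking that after swapping the order of summation the prefactor $1/(1-\lambda)$ is uniform in $t$; this uniformity is precisely what turns the pointwise Lyapunov decrease into a genuine time-independent-gain IOS estimate rather than a bound that degrades with the horizon $t$. The comparison-function algebra invoked at the end — that finite sums and positive scalings of class $\cK$ functions remain class $\cK$ — is routine and can be stated without proof.
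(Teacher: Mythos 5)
Your proposal is correct and follows essentially the same route as the paper's proof: unroll the dissipation inequality~\eqref{second} into $V_\tau \le \lambda^\tau V_0 + \sum_{k<\tau}\lambda^{\tau-1-k}\sigma_2(d_k)$, bound $V_0$ via~\eqref{firstcond2}, sum over $\tau$ using the geometric factor $1/(1-\lambda)$, and combine with~\eqref{firstcond} to read off $\alpha$ and $\gamma$. The only cosmetic difference is that you sum the $V_\tau$ bound first and then invoke~\eqref{firstcond}, whereas the paper applies~\eqref{firstcond} pointwise before summing; the bookkeeping and the resulting comparison functions are the same.
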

\begin{proof}
Recursively using the inequality (\ref{second}) for $\tau \ge 1$ yields
\begin{align*}
&V(\phi(\tau,x_0,U_{\tau -1}), \phi(\tau,x'_0,U'_{\tau -1}))\\
&\le \lambda  V(\phi(\tau-1,x_0,U_{\tau -2}), \phi(\tau-1,x'_0,U'_{\tau -2})) \\
&\hspace{5mm} +  \sigma (|u(\tau-1)-u'(\tau-1)|_p)\\
&\le \lambda^2  V(\phi(\tau-2,x_0,U_{\tau -3}), \phi(\tau-2,x'_0,U'_{\tau -3})) \\
&\hspace{5mm}+ \lambda  \sigma_2 (|u(\tau-2)-u'(\tau-2)|_p ) \\
&\hspace{5mm} +  \sigma_2 (|u(\tau-1)-u'(\tau-1)|_p)\\
&\le \lambda^\tau   V(x_0,x'_0) + \sum_{r = 0}^{\tau-1} \lambda^{\tau-1-r} \sigma_2 (|u(r)-u'(r)|_p).
\end{align*}
From (\ref{firstcond}),
\begin{align*}
&c_1|h(\phi(\tau,x_0,U_{\tau-1},u(\tau)) - h(\phi(\tau,x'_0,U'_{\tau-1}),u'(\tau))|_p \\
&\le \lambda^\tau  \alpha_2 (| x_0 - x'_0 |_p) + \sum_{r = 0}^{\tau-1} \lambda^{\tau-1-r} \sigma_2 (|u(r)-u'(r)|_p) \\
&\hspace{5mm}+ \sigma_1(|u(\tau) - u'(\tau)|_p).
\end{align*}
By taking the summation, we have
\begin{align*}
&c_1 \sum_{\tau =0}^t |h(\phi(\tau,x_0,U_{\tau-1},u(\tau)) - h(\phi(\tau,x'_0,U'_{\tau-1}),u'(\tau))|_p\\
&\le \sum_{\tau =0}^t\Biggl( \lambda^\tau  \alpha_2 (| x_0 - x'_0 |_p) + \sigma_1(|u(\tau) - u'(\tau)|_p) \\
&\hspace{5mm}+  \sum_{r = 0}^{\tau-1} \lambda^{\tau-r-1} \sigma_2 (|u(r)-u'(r)|_p)\Biggr)\\
&\le\frac{1-\lambda^t}{1-\lambda}  \alpha_2 (| x_0 - x'_0|_p)+ \sum_{\tau =0}^t \sigma_1(|u(\tau) - u'(\tau)|_p)\\
&\hspace{5mm}+ \sum_{r=0}^{t-1} \frac{1-\lambda^{t-1-r}}{1-\lambda} \sigma_2 (|u(r)-u'(r)|_p)\\
&\le\frac{\alpha_2 (| x_0 - x'_0 |_p)}{1-\lambda}\\
&\hspace{5mm} + \sum_{r=0}^t \Biggl(\frac{\sigma_2 (|u(r)-u'(r)|_p)}{1-\lambda} +\sigma_1(|u(\tau) - u'(\tau)|_p)\Biggr),
\end{align*}
where in the second last inequality, $\lambda \in (0,1)$ is used. Therefore, the system is incrementally IOS. 
\end{proof}
\begin{secrem}
We mentioned that for differential privacy analysis, $\alpha$ and $\gamma$ are required to be only non-negative functions. Here, we obtain a similar characterization by using non-negative functions $\sigma_1$, $\sigma_2$, and $\alpha_2$.
\red
\end{secrem}

\bibliographystyle{IEEEtran} 
\bibliography{DPCRef}

\begin{thebibliography}{10}
\providecommand{\url}[1]{#1}
\csname url@samestyle\endcsname
\providecommand{\newblock}{\relax}
\providecommand{\bibinfo}[2]{#2}
\providecommand{\BIBentrySTDinterwordspacing}{\spaceskip=0pt\relax}
\providecommand{\BIBentryALTinterwordstretchfactor}{4}
\providecommand{\BIBentryALTinterwordspacing}{\spaceskip=\fontdimen2\font plus
\BIBentryALTinterwordstretchfactor\fontdimen3\font minus
  \fontdimen4\font\relax}
\providecommand{\BIBforeignlanguage}[2]{{%
\expandafter\ifx\csname l@#1\endcsname\relax
\typeout{** WARNING: IEEEtran.bst: No hyphenation pattern has been}%
\typeout{** loaded for the language `#1'. Using the pattern for}%
\typeout{** the default language instead.}%
\else
\language=\csname l@#1\endcsname
\fi
#2}}
\providecommand{\BIBdecl}{\relax}
\BIBdecl

\bibitem{AIM:10}
L.~Atzori, A.~Iera, and G.~Morabito, ``The internet of things: A survey,''
  \emph{Computer Networks}, vol.~54, no.~15, pp. 2787--2805, 2010.

\bibitem{Weber:10}
R.~H. Weber, ``Internet of things--new security and privacy challenges,''
  \emph{Computer Law \& Security Review}, vol.~26, no.~1, pp. 23--30, 2010.

\bibitem{GBM:13}
J.~Gubbi, R.~Buyya, S.~Marusic, and M.~Palaniswami, ``Internet of {T}hings
  ({IoT}): A vision, architectural elements, and future directions,''
  \emph{Future Generation Computer Systems}, vol.~29, no.~7, pp. 1645--1660,
  2013.

\bibitem{WWR:10}
C.~Wang, Q.~Wang, K.~Ren, and W.~Lou, ``Privacy-preserving public auditing for
  data storage security in cloud computing,'' \emph{Proc. 2010 IEEE
  International Conference on Computer Communications}, pp. 1--9, 2010.

\bibitem{Ryan:11}
D.~D. Ryan, ``Cloud computing privacy concerns on our doorstep,''
  \emph{Communications of the ACM}, vol.~54, no.~1, pp. 36--38, 2011.

\bibitem{TJA:10}
H.~Takabi, J.~B.~D. Joshi, and G.-J. Ahn, ``Security and privacy challenges in
  cloud computing environments,'' \emph{IEEE Security \& Privacy}, no.~6, pp.
  24--31, 2010.

\bibitem{Mervis:19}
J.~Mervis, ``Can a set of equations keep {US} census data private,''
  \emph{Science Magazine}, 2019.

\bibitem{Sweeney:97}
L.~Sweeney, ``Weaving technology and policy together to maintain
  confidentiality,'' \emph{The Journal of Law, Medicine \& Ethics}, vol.~25,
  no. 2-3, pp. 98--110, 1997.

\bibitem{MS:04}
B.~Malin and L.~Sweeney, ``How (not) to protect genomic data privacy in a
  distributed network: using trail re-identification to evaluate and design
  anonymity protection systems,'' \emph{Journal of biomedical informatics},
  vol.~37, no.~3, pp. 179--192, 2004.

\bibitem{Hansell:06}
S.~Hansell, ``{AOL} removes search data on vast group of web users,'' \emph{New
  York Times}, vol.~8, p.~C4, 2006.

\bibitem{NS:06}
A.~A. Narayanan and V.~Shmatikov, ``How to break anonymity of the netflix prize
  dataset,'' \emph{arXiv: cs/0610105}, 2006.

\bibitem{WD:96}
L.~Willenborg and T.~D. Waal, \emph{Statistical Disclosure Control in
  Practice}.\hskip 1em plus 0.5em minus 0.4em\relax Springer Science \&
  Business Media, 1996, vol. 111.

\bibitem{WD:12}
------, \emph{Elements of Statistical Disclosure Control}.\hskip 1em plus 0.5em
  minus 0.4em\relax Springer Science \& Business Media, 2012, vol. 155.

\bibitem{Sweeney:02}
L.~Sweeney, ``k-anonymity: A model for protecting privacy,''
  \emph{International Journal of Uncertainty, Fuzziness and Knowledge-Based
  Systems}, vol.~10, no.~5, pp. 557--570, 2002.

\bibitem{MGK:06}
A.~Machanavajjhala, J.~Gehrke, D.~Kifer, and M.~Venkitasubramaniam,
  ``$l$-diversity: Privacy beyond $k$-anonymity,'' \emph{Proc. 22nd
  International Conference on Data Engineering}, pp. 24--24, 2006.

\bibitem{LLV:07}
N.~Li, T.~Li, and S.~Venkatasubramanian, ``$t$-closeness: Privacy beyond
  $k$-anonymity and $l$-diversity,'' \emph{Proc. 23rd IEEE International
  Conference on Data Engineering}, pp. 106--115, 2007.

\bibitem{DMN:06}
C.~Dwork, F.~McSherry, K.~Nissim, and A.~Smith, ``Calibrating noise to
  sensitivity in private data analysis,'' \emph{Proc. 3rd Theory of
  Cryptography Conference}, pp. 265--284, 2006.

\bibitem{DKM:06}
C.~Dwork, K.~Kenthapadi, F.~McSherry, I.~Mironov, and M.~Naor, ``Our data,
  ourselves: Privacy via distributed noise generation,'' \emph{Annual
  International Conference on the Theory and Applications of Cryptographic
  Techniques}, pp. 486--503, 2006.

\bibitem{ZJW:14}
J.~Zhao, T.~Jung, Y.~Wang, and X.~Li, ``Achieving differential privacy of data
  disclosure in the smart grid,'' \emph{Proc. 2014 IEEE International
  Conference on Computer Communications}, pp. 504--512, 2014.

\bibitem{JZC:13}
W.~Jia, H.~Zhu, Z.~Cao, X.~Dong, and C.~Xiao, ``Human-factor-aware
  privacy-preserving aggregation in smart grid,'' \emph{IEEE Systems Journal},
  vol.~8, no.~2, pp. 598--607, 2013.

\bibitem{SDT:15}
H.~Sandberg, G.~D{\'a}n, and R.~Thobaben, ``Differentially private state
  estimation in distribution networks with smart meters,'' \emph{Proc. 54th
  IEEE Conference on Decision and Control}, pp. 4492--4498, 2015.

\bibitem{DE:12}
F.~K. Dankar and K.~E. Emam, ``The application of differential privacy to
  health data,'' \emph{Proc. 2012 International Conference on Extending
  Database Technology}, pp. 158--166, 2012.

\bibitem{DFE:13}
F.~K. Dankar, K.~Fida, and K.~E. Emam, ``Practicing differential privacy in
  health care: A review,'' \emph{Transactions on Data Privacy}, vol.~6, no.~1,
  pp. 35--67, 2013.

\bibitem{YMH:18}
M.~Yang, A.~Margheri, R.~Hu, and V.~Sassone, ``Differentially private data
  sharing in a cloud federation with blockchain,'' \emph{IEEE Cloud Computing},
  vol.~5, no.~6, pp. 69--79, 2018.

\bibitem{AKR:13}
E.~Androulaki, G.~O. Karame, M.~Roeschlin, T.~Scherer, and S.~Capkun,
  ``Evaluating user privacy in bitcoin,'' \emph{Proc. International Conference
  on Financial Cryptography and Data Security}, pp. 34--51, 2013.

\bibitem{MT:07}
F.~McSherry and K.~Talwar, ``Mechanism design via differential privacy,''
  \emph{Proc. 48th Annual Symposium on Foundations of Computer Science},
  vol.~7, pp. 94--103, 2007.

\bibitem{NP:14}
J.~{Le Ny} and G.~J. Pappas, ``Differentially private filtering,'' \emph{IEEE
  Transactions on Automatic Control}, vol.~59, no.~2, pp. 341--354, 2014.

\bibitem{LM:18}
J.~{Le Ny} and M.~Mohammady, ``Differentially private mimo filtering for event
  streams,'' \emph{IEEE Transactions on Automatic Control}, vol.~63, no.~1, pp.
  145--157, 2018.

\bibitem{KOV:17}
P.~Kairouz, S.~Oh, and P.~Viswanath, ``The composition theorem for differential
  privacy,'' \emph{IEEE Transactions on Information Theory}, vol.~63, no.~6,
  pp. 4037--4049, 2017.

\bibitem{HP:98}
M.~Hou and R.~J. Patton, ``Input observability and input reconstruction,''
  \emph{Automatica}, vol.~34, no.~6, pp. 789--794, 1998.

\bibitem{SM:69}
M.~K. Sain and J.~L. Massey, ``Invertibility of linear time-invariant dynamical
  systems,'' \emph{IEEE Transactions on Automatic Control}, vol.~14, no.~2, pp.
  141--149, 1969.

\bibitem{Moylan:77}
P.~J. Moylan, ``Stable inversion of linear systems,'' \emph{IEEE Transactions
  on Automatic Control}, vol.~22, no.~1, pp. 74--78, 1977.

\bibitem{MS:68}
J.~L. Massey and M.~K. Sain, ``Inverses of linear sequential circuits,''
  \emph{IEEE Transactions on Computers}, vol.~17, no.~4, pp. 330--337, 1968.

\bibitem{Huang:04}
J.~Huang, \emph{Nonlinear Output Regulation: Theory and Applications}.\hskip
  1em plus 0.5em minus 0.4em\relax SIAM, 2004, vol.~8.

\bibitem{KC:18}
Y.~Kawano and M.~Cao, ``Revisit input observability: A new approach to attack
  detection and privacy preservation,'' \emph{Proc. 57th IEEE Conference on
  Decision and Control}, pp. 7095--7100, 2018.

\bibitem{YJ:18}
K.~Yazdani, A.~Jones, K.~Leahy, and M.~Hale, ``Differentially private {LQ}
  control,'' \emph{arXiv:1807.05082}, 2018.

\bibitem{HE:17}
M.~Hale and M.~Egerstedt, ``Cloud-enabled differentially private multi-agent
  optimization with constraints,'' \emph{IEEE Transactions on Control of
  Network Systems}, vol.~5, no.~4, pp. 1693--1706, 2017.

\bibitem{HMD:12}
Z.~Huang, S.~Mitra, and G.~Dullerud, ``Differentially private iterative
  synchronous consensus,'' \emph{Proc. 2012 ACM Workshop on Privacy in the
  Electronic Society}, pp. 81--90, 2012.

\bibitem{HMV:15}
Z.~Huang, S.~Mitra, and N.~Vaidya, ``Differentially private distributed
  optimization,'' \emph{Proc. 2015 International Conference on Distributed
  Computing and Networking}, p.~4, 2015.

\bibitem{HT:17}
S.~Han, U.~Topcu, and G.~J. Pappas, ``Differentially private distributed
  constrained optimization,'' \emph{IEEE Transactions on Automatic Control},
  vol.~62, no.~1, pp. 50--64, 2017.

\bibitem{SCS:13}
S.~Song, K.~Chaudhuri, and A.~D. Sarwate, ``Stochastic gradient descent with
  differentially private updates,'' \emph{Proc. 2013 IEEE Global Conference on
  Signal and Information Processing}, pp. 245--248, 2013.

\bibitem{KC2:18}
Y.~Kawano and M.~Cao, ``Differential privacy and qualitative privacy analysis
  for nonlinear dynamical systems,'' \emph{Proc. 7th IFAC Workshop on
  Distributed Estimation and Control in Networked Systems}, pp. 52--57, 2018.

\bibitem{CDE:16}
J.~Cort{\'e}s, G.~E. Dullerud, S.~Han, J.~Le~Ny, S.~Mitra, and G.~J. Pappas,
  ``Differential privacy in control and network systems,'' \emph{Proc. 55th
  IEEE Conference on Decision and Control}, pp. 4252--4272, 2016.

\bibitem{Kailath:80}
T.~Kailath, \emph{Linear Systems}.\hskip 1em plus 0.5em minus 0.4em\relax New
  Jersey: Prentice-Hall, 1980, vol. 156.

\bibitem{Kratz:95}
W.~Kratz, ``Characterization of strong observability and construction of an
  observer,'' \emph{Linear Algebra and Its Applications}, vol. 221, pp. 31--40,
  1995.

\bibitem{Lang:02}
S.~Lang, \emph{Algebra}.\hskip 1em plus 0.5em minus 0.4em\relax New York:
  Springer-Verlag, 2002.

\bibitem{Kalman:60}
R.~E. Kalman, ``Contributions to the theory of optimal control,''
  \emph{Bolet{\'i}n de la Sociedad Matem{\'a}tica Mexicana}, vol.~5, no.~2, pp.
  102--119, 1960.

\bibitem{SS:87}
P.~Sannuti and A.~Saberi, ``Special coordinate basis for multivariable linear
  systems -- finite and infinite zero structure, squaring down and
  decoupling,'' \emph{International Journal of Control}, vol.~45, no.~5, pp.
  1655--1704, 1987.

\bibitem{Billingsley:08}
P.~Billingsley, \emph{Probability and measure}, 3rd~ed.\hskip 1em plus 0.5em
  minus 0.4em\relax New York: Wiley, 2008.

\bibitem{CTP:18}
M.~Cucuzzella, S.~Trip, C.~D. Persis, X.~Cheng, A.~Ferrara, and A.~J. van~der
  Schaft, ``A robust consensus algorithm for current sharing and voltage
  regulation in dc microgrids,'' \emph{IEEE Transactions on Control Systems
  Technology}, vol.~27, no.~4, pp. 1583--1595, 2018.

\bibitem{VB:96}
L.~Vandenberghe and S.~Boyd, ``Semidefinite programming,'' \emph{SIAM Review},
  vol.~38, no.~1, pp. 49--95, 1996.

\bibitem{ZDG:96}
K.~Zhou, J.~C. Doyle, and K.~Glover, \emph{Robust and optimal control}.\hskip
  1em plus 0.5em minus 0.4em\relax New Jersey: Prentice Hall, 1996, vol.~40.

\bibitem{Vidyasagar:11}
M.~Vidyasagar, \emph{Control System Synthesis: A Factorization Approach}.\hskip
  1em plus 0.5em minus 0.4em\relax Morgan \& Claypool Publishers, 2011.

\bibitem{MO:00}
M.~Zeren and H.~{\"O}zbay, ``On the strong stabilization and stable
  {$H^{\infty}$}-controller design problems for {MIMO} systems,''
  \emph{Automatica}, vol.~36, no.~11, pp. 1675--1684, 2000.

\bibitem{GO:05}
S.~Gumussoy and H.~{\"O}zbay, ``Remarks on strong stabilization and stable
  {${\cal H}^{\infty}$} controller design,'' \emph{IEEE Transactions on
  Automatic Control}, vol.~50, no.~12, pp. 2083--2087, 2005.

\bibitem{SIG:97}
R.~E. Skelton, T.~Iwasaki, and D.~E. Grigoriadis, \emph{A Unified Algebraic
  Approach to Control Design}.\hskip 1em plus 0.5em minus 0.4em\relax CRC
  Press, 1997.

\bibitem{Bhattacharyya:78}
S.~Bhattacharyya, ``Observer design for linear systems with unknown inputs,''
  \emph{IEEE transactions on Automatic Control}, vol.~23, no.~3, pp. 483--484,
  1978.

\bibitem{HM:92}
M.~Hou and P.~C. Muller, ``Design of observers for linear systems with unknown
  inputs,'' \emph{IEEE Transactions on Automatic Control}, vol.~37, no.~6, pp.
  871--875, 1992.

\bibitem{HCG:18}
J.~He, L.~Cai, and X.~Guan, ``Preserving data-privacy with added noises:
  Optimal estimation and privacy analysis,'' \emph{IEEE Transactions on
  Information Theory}, vol.~64, no.~8, pp. 5677--5690, 2018.

\bibitem{LB:96}
W.~Lin and C.~I. Byrnes, ``{$H_{\infty}$}-control of discrete-time nonlinear
  systems,'' \emph{IEEE Transactions on Automatic Control}, vol.~41, no.~4, pp.
  494--510, 1996.

\bibitem{LeNy:20}
J.~L. Ny, ``Differentially private nonlinear observer design using contraction
  analysis,'' \emph{International Journal of Robust and Nonlinear Control},
  2020, early access.

\bibitem{NS:90}
H.~Nijmeijer and A.~van~der Schaft, \emph{Nonlinear Dynamical Control
  Systems}.\hskip 1em plus 0.5em minus 0.4em\relax New York: Springer-Verlag,
  1990.

\bibitem{Scherpen:93}
J.~M.~A. Scherpen, ``Balancing for nonlinear systems,'' \emph{Systems \&
  Control Letters}, vol.~21, no.~2, pp. 143--153, 1993.

\bibitem{KS:17}
Y.~Kawano and J.~M.~A. Scherpen, ``Model reduction by differential balancing
  based on nonlinear {H}ankel operators,'' \emph{IEEE Transactions on Automatic
  Control}, vol.~62, no.~7, pp. 3293--3308, 2017.

\bibitem{Sontag:08}
E.~D. Sontag, ``Input to state stability: Basic concepts and results,'' in
  \emph{Nonlinear and optimal control theory}.\hskip 1em plus 0.5em minus
  0.4em\relax Springer, 2008, pp. 163--220.

\bibitem{Angeli:09}
D.~Angeli, ``Further results on incremental input-to-state stability,''
  \emph{IEEE Transactions on Automatic Control}, vol.~54, no.~6, pp.
  1386--1391, 2009.

\end{thebibliography}
\if0
\begin{IEEEbiography}{Yu Kawano} (M'13)  
is currently an Associate Professor in the Department of Mechanical Systems Engineering at Hiroshima University. He received the M.S. and Ph.D. degrees in engineering from Osaka University, Japan, in 2011 and 2013, respectively. From October 2013 to November 2016, he was a Post-Doctoral researcher at both Kyoto University and JST CREST, Japan. From November 2016 to March 2019, he was a Post-Doctoral researcher at the University of Groningen, The Netherlands. He has held visiting research positions at Tallinn University of Technology, Estonia and the University of Groningen and served as a Research Fellow of the Japan Society for the Promotion Science. He is an Associate Editor for Systems and Control Letters. His research interests include nonlinear systems, complex networks, and model reduction.
\end{IEEEbiography}

\begin{IEEEbiography}{Ming Cao} (SM'16) 
has since 2016 been a professor of systems and control with the Engineering and Technology Institute (ENTEG) at the University of Groningen, the Netherlands, where he started as a tenure-track Assistant Professor in 2008. He received the Bachelor degree in 1999 and the Master degree in 2002 from Tsinghua University, Beijing, China, and the Ph.D. degree in 2007 from Yale University, New Haven, CT, USA, all in Electrical Engineering. From September 2007 to August 2008, he was a Postdoctoral Research Associate with the Department of Mechanical and Aerospace Engineering at Princeton University, Princeton, NJ, USA. He worked as a research intern during the summer of 2006 with the Mathematical Sciences Department at the IBM T. J. Watson Research Center, NY, USA. He is the 2017 and inaugural recipient of the Manfred Thoma medal from the International Federation of Automatic Control (IFAC) and the 2016 recipient of the European Control Award sponsored by the European Control Association (EUCA). He is a Senior Editor for Systems and Control Letters, and an Associate Editor for IEEE Transactions on Automatic Control, IEEE Transactions on Circuits and Systems and IEEE Circuits and Systems Magazine. He is a vice chair of the IFAC Technical Committee on Large-Scale Complex Systems. His research interests include autonomous agents and multi-agent systems, complex networks and decision-making processes. 
\end{IEEEbiography}
\fi
\end{document}